\numberwithin{equation}{section}
\newtheorem{theorem}{Theorem}[section]
\newtheorem{lemma}[theorem]{Lemma}
\newtheorem{corollary}[theorem]{Corollary}
\newtheorem{proposition}[theorem]{Proposition}
\newtheorem{conjecture}[theorem]{Conjecture}
\theoremstyle{definition}
\newtheorem{example}[theorem]{Example}
\newtheorem{remark}[theorem]{Remark}
\newtheorem{definition}[theorem]{Definition}
\newtheorem{notation}[theorem]{Notation}
\def\CC{{\mathbb C}}
\def\ZZ{{\mathbb Z}}
\DeclareMathOperator{\Tor}{Tor}
\DeclareMathOperator{\reg}{reg}
\DeclareMathOperator{\Sym}{Sym}
\newcommand{\la}{\langle}
\newcommand{\ra}{\rangle}
\newcommand{\OO}{\mathcal{O}}
\newcommand{\bF}{\mathbf{F}}
\newcommand{\arxiv}[1]{\href{http://arxiv.org/abs/#1}{{\tt arXiv:#1}}}
\newcommand{\bd}{\begin{picture}(8,6)
\put(4,-1){\line(0,1){8}}
\put(4,3){\circle*{6}}
\end{picture}}
\newcommand{\nb}{\begin{picture}(8,6)
\put(4,-1){\line(0,1){8}}
\put(3,3){\line(1,0){2}}
\end{picture}}
\newcommand{\vd}{\begin{picture}(8,10)
\put(4,5){\circle*{1}}
\put(4,2){\circle*{1}}
\put(4,8){\circle*{1}}
\end{picture}}
\newlength\tmpl
\begin{document}

\title[Jack polynomials as fractional quantum Hall states]{Jack polynomials as fractional quantum Hall states and the Betti numbers of the $(k+1)$-equals ideal}

\date{March 17, 2013}
\subjclass[2010]{05E05,	
14N20, 
16S99, 
81V70
}

\author{Christine Berkesch Zamaere}
\address{Department of Mathematics, Duke University, Durham, NC, USA}
\curraddr{School of Mathematics, University of Minnesota, Minneapolis, MN, USA}
\email{cberkesc@math.umn.edu}

\author{Stephen Griffeth}
\address{Instituto de Matem\'atica y F\'isica, Universidad de Talca, Talca, Chile} \email{sgriffeth@inst-mat.utalca.cl}

\author{Steven V Sam}
\address{Department of Mathematics, University of California, Berkeley, CA, USA}
\email{svs@math.berkeley.edu}

\thanks{SG acknowledges the financial support of Fondecyt Proyecto Regular 1110072. SS was supported by a Miller research fellowship.}

\begin{abstract} 
We show that for Jack parameter $\alpha=-(k+1)/(r-1)$, certain Jack polynomials studied by Feigin--Jimbo--Miwa--Mukhin vanish to order $r$ when $k+1$ of the coordinates coincide. This result was conjectured by Bernevig and Haldane, who proposed that these Jack polynomials are model wavefunctions for fractional quantum Hall states. Special cases of these Jack polynomials include the wavefunctions of Laughlin and Read--Rezayi. In fact, along these lines we prove several vanishing theorems known as \emph{clustering} properties for Jack polynomials in the mathematical physics literature, special cases of which had previously been conjectured by Bernevig and Haldane. Motivated by the method of proof, which in case $r=2$ identifies the span of the relevant Jack polynomials with the $S_n$-invariant part of a unitary representation of the rational Cherednik algebra, we conjecture that unitary representations of the type A Cherednik algebra have graded minimal free resolutions of Bernstein--Gelfand--Gelfand type; we prove this for the ideal of the $(k+1)$-equals arrangement in the case when the number of coordinates $n$ is at most $2k+1$. In general, our conjecture predicts the graded $S_n$-equivariant Betti numbers of the ideal of the $(k+1)$-equals arrangement with no restriction on the number of ambient dimensions. 
\end{abstract}

\maketitle
\section{Introduction}
\label{sec:intro}

The purpose of this paper is to bring tools from the representation theory of Cherednik algebras to bear on problems arising from the physics of the quantum fractional Hall effect and combinatorial commutative algebra. More specifically, we prove conjectures of Bernevig and Haldane~\cite[page 4, first full paragraph]{BeHa08} and~\cite[Section III A,B]{BeHa2} on the order of vanishing of certain Jack polynomials, with Jack parameter equal to a negative rational number $\alpha=-(k+1)/(r-1)$, along certain highly symmetric linear subspace arrangements and study minimal free resolutions of the ideals of these arrangements. The arrangements which appear are generalizations of the \emph{$(k+1)$-equals arrangement}, the set of points in $\CC^n$ where some $k+1$ coordinates are equal. 

In more detail, let $\alpha$ be the Jack parameter. See Section~\ref{sec:symm} for the definition of the Jack polynomials $p^{(\alpha)}_\lambda \in \CC[x_1,x_2,\dots,x_n]$, indexed by partitions $\lambda$ with at most $n$ parts; these are symmetric polynomials whose coefficients are rational functions in $\alpha$. We now state our first main result, which follows from Theorem~\ref{thm:mainthm1}, and in particular proves conjectures of Bernevig and Haldane~\cite{BeHa08,BeHa2}. In fact, it should be regarded as interpolating between the conjectures made in \cite[Section III A,B]{BeHa2}, and so predicts a host of new vanishing properties, in addition to enlarging the set of Jack polynomials appearing in the original conjectures. It describes the behavior of certain classes of Jack polynomials upon allowing some of the coordinates to cluster together. 

\begin{theorem} 
\label{thm:intro}
Suppose $\alpha=-\frac{k+1}{r-1}$ for positive coprime integers $r-1$ and $k+1$ with $1 \leq k < n$, and let $s$ be a positive integer with $s(k+1) \leq n$. Let $\lambda \in \ZZ_{\geq 0}^n$ be a partition such that 
\begin{align}
\lambda_i \geq \lambda_{i+k}+r \quad 
&\hbox{for all $1 \leq i \leq n-(s(k+1)+k)+1$,}
\nonumber
\\
\lambda_i \geq \lambda_{i+k}+s(r-1)+1 \quad 
&\hbox{for all $n-(s(k+1)+k)+1 < i \leq n-(s(k+1)-1)$,}
\text{ and}
\nonumber
\\
\lambda_i \leq \lambda_{i+k+1}+r-1 \quad 
&\hbox{for all $n-(s(k+1)-1)<i \leq n-(k+1)$.} \label{no poles}
\end{align} 
Then the Jack polynomial $p^{(\alpha)}_\lambda(x_1,x_2,\dots,x_n)$ is well-defined, and for each divisor $d$ of $s$,
specializing the first $d(k+1)$ variables to $z_1$, the next $d(k+1)$ variables to $z_2$, and so on, up to the $(s/d)$th cluster of $d(k+1)$ variables, for which we set the first $d(k+1)-1$ equal to $z=z_{s/d}$, the specialized function $p^{(\alpha)}_\lambda(z_1,\dots,z_1,z_2,\dots,z_2,\dots,z,\dots,z,x_{s(k+1)},\dots,x_n)$ is divisible by 
\[
\prod_{i=s(k+1)}^n (x_i-z)^{d(r-1)+1}.
\] 
\end{theorem}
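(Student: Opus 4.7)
The plan is to use the representation theory of the rational Cherednik algebra $H_c = H_c(S_n)$ at parameter $c = (r-1)/(k+1)$, so that $\alpha = -1/c$. The polynomial representation $\CC[x_1,\dots,x_n]$ has a unique maximal proper submodule $J$, and a theorem of Feigin--Jimbo--Miwa--Mukhin identifies the $S_n$-invariants of the irreducible quotient $L = \CC[x]/J$ with the span of (appropriate representatives of) the Jack polynomials $p^{(\alpha)}_\lambda$ for $(k,r)$-admissible $\lambda$, i.e.\ $\lambda_i \geq \lambda_{i+k}+r$. Well-definedness of $p^{(\alpha)}_\lambda$ at the given rational value of $\alpha$ should follow from the Knop--Sahi pole combinatorics: the three inequalities in \eqref{no poles} are tailored precisely to avoid denominators that would otherwise vanish. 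I would begin by giving an explicit check of this pole avoidance, translating the inequalities into the condition that no removable box of the Young diagram contributes a singular factor.

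The substance of the proof is to analyze how the image of $p^{(\alpha)}_\lambda$ in $L$ behaves under the clustering specialization. Since $L$ is supported scheme-theoretically on the $(k+1)$-equals arrangement with a thickening of order $r$ along each irreducible component (a result of Etingof--Stoica type), any polynomial representing an element of $L$ vanishes to order $r$ whenever $k+1$ coordinates collide. Setting the first $d(k+1)$ variables equal to $z_1$, the next $d(k+1)$ to $z_2$, and so on, corresponds to restricting along a diagonal flag of subspaces. I would interpret this restriction using the Bezrukavnikov--Etingof parabolic restriction functor at the parabolic $S_{d(k+1)} \times \cdots \times S_{d(k+1)} \times S_{d(k+1)-1} \times S_1^{n-s(k+1)+1}$, producing a module over a smaller Cherednik algebra whose support and scheme structure encode the deeper-stratum vanishing orders.

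To pin down the exact order $d(r-1)+1$, I would combine two ingredients. First, the Jack polynomials are joint eigenvectors of the Cherednik--Dunkl operators with known combinatorial eigenvalues, so after specialization the resulting polynomial decomposes along common eigenspaces for the parabolic; the lowest degree in which a specialization can be nonzero is forced by weight/character considerations in the smaller Cherednik module. Second, I would use an induction on $d$, with the base $d=1$ being the order-$r$ vanishing supplied by the $(k+1)$-equals scheme structure of $L$, and the inductive step realized by an intertwiner that relates the clustering of $d+1$ columns to the clustering of $d$ columns combined with an extra copy of the basic $(k+1)$-cluster vanishing. Divisors $d \mid s$ appear precisely because the intertwiner requires compatibility of the group of clusters with the total number $s$.

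The main obstacle is the asymmetric last cluster, containing only $d(k+1)-1$ equal variables together with the still-free variables $x_{s(k+1)},\dots,x_n$. Here the parabolic is nonstandard ($S_{d(k+1)-1} \times S_1^{n-s(k+1)+1}$ inside $S_{n-(s-1)d(k+1)}$), and the relevant parabolic restriction of $L$ lives on a stratum that is not closed. Controlling that the resulting specialized polynomial is divisible by the full product $\prod_{i=s(k+1)}^n (x_i-z)^{d(r-1)+1}$, uniformly in the free variables, will require either a careful local analysis of the completion of $L$ at a generic point of this stratum, or an explicit shifted-Jack computation along the lines of Knop--Sahi. I expect this boundary analysis to be the technical heart of the proof, and in particular to explain why the second and third of the hypotheses in \eqref{no poles} are needed in exactly the form stated.
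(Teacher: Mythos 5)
The central problem is a logical inversion at the foundation of your argument. You place the admissible Jack polynomials inside the simple quotient $L=\CC[x_1,\dots,x_n]/J$ of the polynomial representation and then conclude that they vanish on the $(k+1)$-equals locus because ``$L$ is supported scheme-theoretically on the $(k+1)$-equals arrangement.'' Elements of a module supported on a subvariety do not vanish there --- it is the annihilator, i.e.\ the kernel of $\CC[x]\to L$, that cuts out the support --- and moreover at $c=(r-1)/(k+1)$ the simple quotient of $\CC[x_1,\dots,x_n]$ is supported on the deepest stratum $X_{q,m}$, not on $X_{1,m}$. What is actually true, and what the paper proves, is that these Jack polynomials lie in the \emph{submodules} $I_{s,\ell,m}$ of the polynomial representation (Theorems~\ref{thm:non-symm vanishing} and~\ref{thm:symm vanishing}), which are built from the Berest--Etingof--Ginzburg ideal $I_{\ell,m}\subseteq\fm^{\ell}$; membership there yields $\ell$-annihilation of each component of $X_{s,m}$, and the passage from order $\ell=r-1$ to order $r$ comes from symmetrization (Lemma~\ref{lem:symm lemma}), not from any thickened scheme structure on $L$. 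Your sketch contains no mechanism producing vanishing to order $r$ rather than order $1$, and it omits the seed of the whole construction: Dunkl's singular polynomials $f_{\mu_{s,\ell,m}}$, from which the Knop--Sahi recursion is run. ``Pole avoidance'' alone does not place $f_\mu$ in $I_{s,\ell,m}$; the paper's analysis of the sets $R(\mu)$ and non-semisimple triples simultaneously proves well-definedness and membership, and this is exactly where the three hypotheses~\eqref{no poles} enter.

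For the divisor-$d$ refinement your plan (parabolic restriction at a nonstandard parabolic, eigenvalue considerations, induction on $d$ via unspecified intertwiners) defers precisely the step where the content lies, and you acknowledge as much. The paper's route is short and different in substance: in $dm$ variables the generators of $I_{d,\ell,m}$ have degree $d\ell$ and are killed by $y_1+\cdots+y_{dm}=\sum_i\partial_i$, hence are polynomials in the differences $x_i-x_{i+1}$ and therefore lie in the $(d\ell)$-th power of the ideal of the small diagonal; gluing via the elementary induction Lemma~\ref{lem:IOU} and invoking the Etingof--Stoica classification (Theorem~\ref{thm:submodule classification}) identifies the resulting intersection with $I_{dd',\ell,m}$ (Lemma~\ref{lem:non symm version of IIIB}), and symmetrization again adds one order, giving $d(r-1)+1$ (Theorem~\ref{thm:mainthm1}). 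Finally, the ``asymmetric last cluster'' that you expect to be the technical heart is a non-issue: once one has $(d\ell+1)$-annihilation of every $S_n$-translate of $Z_{s/d,\,d(k+1)}$, the divisibility by $\prod_{i=s(k+1)}^n(x_i-z)^{d(r-1)+1}$ follows from the coprimality argument of Lemma~\ref{lem:order of vanishing}; no completion or local analysis at a non-closed stratum is required.
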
 The theorem asserts that the Jack polynomials satisfying its hypotheses vanish to order $r$ upon forming $s$ clusters of $k+1$ variables each, and that if one forms instead larger clusters of $d(k+1)$ variables, the order of vanishing increases by roughly a factor of $d$, to $d (r-1) +1$. For example, when $n=10$, $k+1=5$, and $r-1=3$, the $s=1$ case of Theorem~\ref{thm:intro} asserts that the Jack polynomial indexed by 
$\lambda = (8,8,4,4,4,4,0,0,0,0)$ vanishes to order $4$ after the appropriate specialization. The original conjectures of Bernevig and Haldane obtain by replacing condition~\eqref{no poles} by the condition $\lambda_i=0$ for $i>n-(s(k+1)-1)$, and taking $d=1$ or $d=s$. Some special cases of Theorem~\ref{thm:intro} were established by Baratta and Forrester, who obtained a factorization formula for certain Jack polynomials and conjectured an analog for certain Macdonald polynomials~\cite{BaFo}. Dunkl and Luque subsequently proved the Baratta-Forrester conjecture in ~\cite{DuLu}.

In case $s=1$, the class of partitions specified in Theorem~\ref{thm:intro} is precisely the set of \emph{$(k,r,n)$-admissible} partitions studied in \cite{FJMM}. In case $s>1$, the set of partitions allowed by the hypotheses strictly includes the set of $(k,r,s,n)$-admissible partitions of \cite{BeHa2}, which should correspond to Jack polynomials relevant for the construction of fractional quantum Hall quasiparticle excitations. We note that the $r-1=s=1$ case corresponds to a unitary Cherednik algebra module $L_{1/(k+1)}(\lambda)$ for a certain partition $\lambda$, so the space of states carries a positive definite Hermitian form compatible with the Cherednik algebra action and in particular making the usual Calogero--Moser Hamiltonian self-adjoint. (This Hermitian pairing is not the usual one for the Calogero--Moser system, but rather comes from the identification of the state space with $L_{1/(k+1)}(\lambda)$.) 

Our proof proceeds by first establishing an analogous result for non-symmetric Jack polynomials (which are presumably spin fractional quantum Hall wavefunctions) in Theorem~\ref{thm:non-symm vanishing}. To do this, we combine results of Dunkl~\cite{Dun} and Etingof--Stoica~\cite{EtSt} to produce a few non-symmetric Jack polynomials that satisfy an analogue of Theorem~\ref{thm:intro}. We continue with an analysis of when poles may appear in the Knop--Sahi recursion \cite{KnSa} to construct a broader class of non-symmetric Jack polynomials satisfying the same vanishing conditions. Symmetrizing these produces the Jack polynomials appearing in Theorem~\ref{thm:intro}. Finally, we use (an elementary version of) the method of Bezrukavnikov--Etingof induction and restriction functors in~\cite{BE-induction} to obtain a version of the result that depends on a divisor $d$ of $s$. We now state a consequence of this method of proof (see Lemma~\ref{lem:non symm version of IIIB}). 

\begin{corollary}
Let $X_{s,k+1}$ be the set of points in $\CC^n$ having $s$ clusters of $k+1$ equal coordinates. Then every polynomial function $f$ vanishing on $X_{s,k+1}$ vanishes to order at least $s$ at each point of the $s(k+1)$-equals arrangement $X_{1,s(k+1)} \subseteq X_{s,k+1}$.
\end{corollary}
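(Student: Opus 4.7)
The plan is to reduce the vanishing statement to a graded injectivity of a restriction map and then invoke Lemma~\ref{lem:non symm version of IIIB} to handle that injectivity. First, the simultaneous translation invariance $x_i\mapsto x_i+c$ of the arrangement $X_{s,k+1}$ together with the $S_n$-action permuting coordinates allow me to reduce to the case where the distinguished point of $X_{1,s(k+1)}$ has the form $(0,\dots,0,a_{s(k+1)+1},\dots,a_n)$ with the cluster in positions $1,\dots,s(k+1)$. A Taylor-coefficient extraction in the parameters $a_i$ shows that requiring $f\in I(X_{s,k+1})$ to vanish to order $\geq s$ at every such point (as $a$ varies) is equivalent to the purely algebraic containment
\[
I(X_{s,k+1}) \;\subseteq\; (x_1,x_2,\dots,x_{s(k+1)})^s \subseteq \CC[x_1,\dots,x_n].
\]

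Next I would separate the variables $x'=(x_1,\dots,x_{s(k+1)})$ from $x''=(x_{s(k+1)+1},\dots,x_n)$ by writing $f=\sum_{\alpha''}g_{\alpha''}(x')\,(x'')^{\alpha''}$. For each partition $B=(B_1,\dots,B_s)$ of $\{1,\dots,s(k+1)\}$ into $s$ blocks of size $k+1$, the linear flat
\[
L_B = L_B^{\text{small}}\times\CC^{n-s(k+1)},\qquad L_B^{\text{small}}=\{x'\in\CC^{s(k+1)}:x_i=x_j\text{ for }i,j\in B_t\}
\]
is contained in $X_{s,k+1}$. Since $f$ vanishes on each $L_B$ and the monomials $(x'')^{\alpha''}$ are linearly independent, each coefficient $g_{\alpha''}$ must vanish on $L_B^{\text{small}}$, so $g_{\alpha''}\in\bigcap_B I(L_B^{\text{small}})\subseteq\CC[x']$. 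The containment displayed above therefore reduces to showing that this graded intersection has trivial component in every degree $m<s$; equivalently, that the restriction map
\[
\CC[x_1,\dots,x_{s(k+1)}]_m \longrightarrow \prod_B\CC[L_B^{\text{small}}]_m \cong \prod_B\CC[z_1,\dots,z_s]_m
\]
is injective for all $m<s$.

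This last injectivity is exactly the content of Lemma~\ref{lem:non symm version of IIIB}: the non-symmetric Jack polynomials at $\alpha=-(k+1)/(r-1)$ constructed in Theorem~\ref{thm:non-symm vanishing}, together with the Knop--Sahi recursion and the elementary Bezrukavnikov--Etingof restriction developed en route to Theorem~\ref{thm:intro}, furnish enough cluster-separating test elements to force the vanishing of the graded components of $\bigcap_B I(L_B^{\text{small}})$ below degree $s$. Feeding this injectivity back through the reductions above yields the corollary.

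The main obstacle is precisely this last injectivity step. A naive induction on $s$ (attempting to show $\partial_i f\in I(X_{s-1,k+1})$ whenever $f\in I(X_{s,k+1})$) fails, since a generic point of $X_{s-1,k+1}$ does not lie in $X_{s,k+1}$. A direct combinatorial attack on the linear system imposed by restriction to all the flats $L_B$ is tractable in the first nontrivial case (one checks by hand that for $s=2,k=1$ the degree-$1$ piece of the intersection really is zero), but it becomes unwieldy for $s\geq 3$; the representation theory of the rational Cherednik algebra---specifically the non-symmetric Jack polynomial basis of the unitary module $L_{1/(k+1)}(\lambda)$---seems essential to organize the argument and deliver the injectivity in every degree below $s$.
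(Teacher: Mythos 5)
Your argument is correct and ultimately rests on the same input as the paper: Lemma~\ref{lem:non symm version of IIIB} with $\ell=1$, $d=s$, $d'=1$, since for $\ell=1$ the ideal $I_{s,1,k+1}$ is exactly the radical ideal of $X_{s,k+1}$ and $s$-annihilation of the components of $X_{1,s(k+1)}$ gives order-$s$ vanishing at each of their points. Your preliminary reductions (translation and $S_n$ symmetry, Taylor-coefficient extraction, and separation of variables down to the square case $n=s(k+1)$ at the origin) are sound but not needed, because the lemma as stated already applies verbatim for every $n$ and every component.
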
 

We are not aware of a proof of this result that does not involve representation theory and non-symmetric Jack polynomials.

It seems likely that the Macdonald polynomial analogue of our result is amenable to the same techniques, although there are some technical obstacles to overcome; the papers~\cite{Kas} and~\cite{Eno} provide a starting point. We do not have a method to prove the superspace analogue of the clustering phenomenon observed in~\cite{DLM}; perhaps a super Cherednik algebra waits to be discovered.

In the last two sections of the paper, Sections~\ref{sec:BGG conj} and~\ref{sec:proof}, we state a conjectural formula for a BGG-style resolution of unitary modules of the type A Cherednik algebra (in Conjecture~\ref{conj:BGG conjecture}) and prove it in some special cases. From now on, in order to conform to standard notation in the Cherednik algebra literature, we set $c:=-1/\alpha$ and will put $m:=k+1$ and $\ell:=r-1$. The main result of these sections is Theorem~\ref{thm:pureres}, a consequence of which is the construction of a minimal pure free resolution for the ideal of the $m$-equals arrangement in case $2m \ge n+1$. (See Section~\ref{sec:proof} for a definition of a pure resolution.) 

\begin{theorem} 
\label{thm:intro2}
If $2m > n+1$, then the coordinate ring of the $m$-equals arrangement has a pure resolution with degree sequence $(0,n-m+1,n-m+2, \dots,m-1,m+1,m+2,\dots,n-1,n)$. Thus the defining ideal is resolved by two linear strands separated by a quadratic jump. If $2m=n+1$, then the defining ideal has a linear resolution. In both cases, this resolution is a complex of modules for the rational Cherednik algebra $H_{1/m}(S_n,\CC^n)$.
\end{theorem}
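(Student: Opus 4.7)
The plan is to realize the coordinate ring of the $m$-equals arrangement as the irreducible unitary quotient $L_{1/m}(\triv)$ of the polynomial representation of the rational Cherednik algebra $H_{1/m}(S_n, \CC^n)$, and then build an explicit BGG-type resolution of $L_{1/m}(\triv)$ by shifted standard modules $\Delta_{1/m}(\mu)$. Each $\Delta_{1/m}(\mu) \cong \CC[x_1,\ldots,x_n] \otimes \mu$ is a free graded $\CC[x]$-module generated in a single degree, so the BGG complex automatically descends to a graded free resolution of $\CC[x]/I_m$ over $\CC[x]$, and the resolution will be pure provided exactly one partition appears in each homological position.

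The first step is the identification $\CC[x]/I_m \cong L_{1/m}(\triv)$ as $H_{1/m}$-modules. This follows from the $s=1$, $r=2$, $\lambda=\bempty$ case of Theorem~\ref{thm:intro} combined with the Etingof--Stoica description of maximal submodules of $\Delta_c(\triv)$: under the specialization $c = 1/m$, the ideal $I_m$ is exactly the maximal proper $H_{1/m}$-submodule of $\Delta_{1/m}(\triv) = \CC[x]$, so the simple quotient is the coordinate ring of the arrangement.

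The second step is to write down the terms and the differentials of the conjectural BGG complex from Conjecture~\ref{conj:BGG conjecture}. Under the hypothesis $2m > n+1$, the combinatorial rule should produce exactly one partition per homological position: the hook $(n-m+1,1^{m-1})$ in position one (matching the degree $n-m+1$ of the minimal generators of $I_m$, which transform by this hook), a chain of hook-type partitions filling the first linear strand up through position $2m-n-1$, an exceptional partition providing the quadratic jump from degree $m-1$ to $m+1$, then a second chain of hook-type partitions filling the second strand, and culminating in $\sign = (1^n)$ at top homological degree $n$. These partitions and their generation degrees can be identified using Dunkl operators and the recursive structure of non-symmetric Jack polynomials from Theorem~\ref{thm:non-symm vanishing}; the differentials are given by multiplication by the corresponding singular vectors in the appropriate $\Delta_{1/m}(\mu)$.

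The third, and most delicate, step is to prove exactness of the complex. My strategy is to compute the graded $S_n$-equivariant character of the proposed complex from the standard module character formulas and compare with the graded $S_n$-character of $L_{1/m}(\triv)$, which one obtains either from an Etingof--Rains-style Hilbert series or from direct analysis of the irreducible components of the $m$-equals arrangement (made tractable by the fact that any two $m$-subsets of $\{1,\ldots,n\}$ must overlap once $2m > n$, so inclusion--exclusion collapses to a short sum). Matching Euler characteristics together with the Buchsbaum--Eisenbud acyclicity criterion, applied to the complex of finitely generated free $\CC[x]$-modules, then force exactness. I expect the main obstacle to be the explicit construction of the intermediate singular vectors that define the differentials and the verification that successive differentials compose to zero; once the complex is on the table, exactness is essentially a character computation, rendered manageable by the hypothesis $2m \geq n+1$ because it rules out the ``resonances'' that would insert further terms into the resolution for larger $n$. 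The borderline case $2m = n+1$ corresponds to the collapse of the two strands into a single one, yielding the linear resolution asserted.
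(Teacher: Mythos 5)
Your starting point is sound: when $2m\ge n+1$ the Etingof--Stoica classification (Theorem~\ref{thm:submodule classification}) gives $q=1$, so the polynomial representation $\Delta_{1/m}(\triv)=\CC[x_1,\dots,x_n]$ has a unique nonzero proper $H_{1/m}$-submodule, namely the $m$-equals ideal $I$, whence $A/I\cong L_{1/m}(\triv)$; and a resolution by shifted standard modules would indeed be a pure free resolution over $A$. But your description of the terms is off: the minimal generators of $I$ span the two-row Specht module $S^{(m-1,n-m+1)}$ in degree $k=n-m+1$, not the hook $(n-m+1,1^{m-1})$; the modules in the two strands are the near-hooks $S^{(n-k+1-i,k,1^{i-1})}$ ($1\le i\le n-2k+1$) and $S^{(k-1,k-1-i,1^{n-2k+2+i})}$ ($0\le i\le k-2$); and the top homological degree is $n-k=m-1$ (the last partition is $(k-1,1,1^{n-k})$, which is the sign representation only when $k=2$), not $n$.

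The genuine gap is the exactness step. Equality of graded $S_n$-equivariant Euler characteristics only pins down the alternating sum of the homology of your complex; it cannot exclude homology occurring in two different homological degrees with matching characters, so it does not ``force exactness.'' Nor is the Buchsbaum--Eisenbud criterion a character-level tool: it requires the ranks of the differentials and grade bounds on their ideals of maximal minors, i.e., explicit control of the maps --- precisely what you defer. The paper closes this gap with representation theory rather than a character count: the Kronecker identities \eqref{eqn:kronecker1}--\eqref{eqn:kronecker3} give existence and uniqueness (up to scalar) of the equivariant maps and vanishing of consecutive compositions; the first strand is $H_{1/m}$-equivariant because the generators land on singular vectors, and each standard module there has a \emph{unique} nonzero proper submodule (the diameter argument of \cite[\S 8.1]{Gri}), forcing image $=$ kernel; at the quadratic jump, a dominance-order and $m$-core comparison together with the multiplicity-one of $L_c(\lambda)$ in $\Delta_c(\lambda)$ rules out linear syzygies; and the bound $\reg A/I=k$ of Lemma~\ref{lem:reg}, which rests on the Cohen--Macaulayness of $A/I$ from \cite{EGL} plus a Hilbert-series computation, rules out higher-degree syzygies in the second strand. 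Without an input of this kind (Cohen--Macaulayness/regularity, or explicit knowledge of the differentials sufficient for an acyclicity criterion), your Euler-characteristic comparison cannot complete the proof.
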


Conjecture~\ref{conj:BGG conjecture} asserts that every unitary module for the type A Cherednik algebra has a resolution by modules that are sums of standard modules (and predicts exactly which modules occur, all with multiplicity one). As a consequence, we predict a precise combinatorial formula for the graded, $S_n$-equivariant Betti numbers of the ideal of the $m$-equals arrangement, with no restrictions on $m$. In particular, the truth of the conjecture would imply that the projective dimension of the coordinate ring of the $m$-equals arrangement is 
\[
(m-2)e+1,
\] 
where the integer $e$ is the total number of empty spaces lying on a runner above a bead in the $m$-abacus diagram for the partition $\lambda=((m-1)^q,r)$ of $n$, determined by dividing $n$ by $m-1$ to obtain a quotient $q$ and remainder $r$ (see Definition~\ref{def:abacus}). We note that in this formula, $e>1$ if $m \leq n/2$, so we predict that the $m$-equals arrangement is Cohen--Macaulay exactly in the cases covered by Theorem~\ref{thm:intro2}, that is, for $m > n/2$, or if $m=2$. This prediction is implied by \cite[Proposition 3.11]{EGL}.

We end this introduction by mentioning an analogous situation that exists for determinantal ideals. Consider the space $X$ of $n \times m$ matrices as an affine space and let $X_r$ be the locus of matrices of rank $\le r$. It is clear that there is an action of the Lie algebra $\mathfrak{gl}_n \times \mathfrak{gl}_m$ on its coordinate ring, and in fact, this action extends to an action of the larger Lie algebra $\mathfrak{gl}_{m+n}$ in such a way that the coordinate ring is an irreducible unitarizable highest weight module. In \cite{enrighthunziker}, it is observed that the BGG resolutions constructed in \cite{EnWi} are minimal free resolutions of the coordinate ring of $X_r$ over the coordinate ring of $X$. This provides a representation-theoretic interpretation of the resolution discovered by Lascoux in \cite{Las}. A similar story emerges if we replace $X$ with the space of symmetric or skew-symmetric matrices, or, for irreducible Hermitian symmetric pairs in general, with certain versions of determinantal varieties associated to Wallach representations \cite{EH-exceptional}.

\subsection*{Outline.}
In Section~\ref{sec:definitions}, we provide preliminary tools and results. 
Section~\ref{sec:symm} contains our main results on Jack polynomials, Theorem~\ref{thm:non-symm vanishing} and Theorem~\ref{thm:mainthm1}, which yield Theorem~\ref{thm:intro}.  
We state our conjecture regarding BGG-style resolutions of unitary modules of the type A Cherednik algebra in Section~\ref{sec:BGG conj}, and prove a special case, Theorem~\ref{thm:intro2}, in Section~\ref{sec:proof}.

\subsection*{Acknowledgments.} 
We thank Patrick Desrosiers, Jessica Gatica, and Luc Lapointe for teaching us about clustering properties of Jack polynomials, Bernard Leclerc for explanations about the character formula for affine parabolic category $\OO$, Catharina Stroppel for helpful pointers on BGG resolutions, and Charles Dunkl and Peter Forrester for useful comments on a preliminary version of this article. We thank Alexander Kleshchev for bringing Oliver Ruff's work to our attention and Ivan Cherednik for explaining how to construct BGG-style resolutions for certain affine Hecke algebra modules.

\section{The polynomial representation of the rational Cherednik algebra} 
\label{sec:definitions}
 
In this section we first define the Cherednik algebra of a finite reflection group $W$. See \cite[\S 3]{etingofma} for an exposition. We are primarily interested in the case $W=S_n$, as our goal for the section is to derive some first consequences of results of Etingof and Stoica from~\cite{EtSt}.

Let $V$ be a finite dimensional complex vector space, and let $W \subset \mathrm{GL}(V)$ be a finite subgroup such that it is generated by the set
\[
T:=\{s \in W \ | \ \mathrm{codim}(\mathrm{fix}(s))=1 \},
\]
where $\mathrm{fix}(s) := \{v \in V \mid s(v) = v\}$. 
For each $s \in T$ choose $\alpha_s \in V^*$ such that the zero set of $\alpha_s$ is the fixed space of $s$, and let $c_s \in \CC$ be complex numbers with $c_{w s w^{-1}}=c_s$ for all $s \in T$ and $w \in W$. Let $\CC[V]$ be the ring of polynomial functions on $V$, and for each $y \in V$ define a \emph{Dunkl operator} by the formula
\[
y(f):=\partial_y(f)-\sum_{s \in T} c_s \la \alpha_s,y \ra \frac{f-s(f)}{\alpha_s} \quad \hbox{for $f \in \CC[V]$.}
\] 
This formula depends on the choice of $c_s$ but not on the choice of $\alpha_s$.
 
\begin{definition}
The \emph{rational Cherednik algebra} corresponding to these data is the subalgebra $H_c(W,V) \subseteq \mathrm{End}_\CC(\CC[V])$ generated by $W$, $\CC[V]$, and the Dunkl operators for all $y \in V$. We call $\CC[V]$ the \emph{polynomial representation} of $H_c(W,V)$. 
\end{definition}

\begin{notation}
When $W=S_n$ and $V=\CC^n$ is the permutation representation of $S_n$, we use respective coordinates $x_1,x_2,\dots,x_n$ and $y_1,y_2,\dots,y_n$. Here, the $x_i$'s are a basis of $V^*$ and the $y_i$'s are the dual basis of $V$, thought of as acting by Dunkl operators on $\CC[V]=\CC[x_1,x_2,\dots,x_n]$. 
\end{notation}
 
Consider now the case that $W:=S_n$ is acting on 
\[
V_n:=\{(p_1,p_2,\dots,p_n) \in \CC^n \ | \ p_1+p_2+\cdots+p_n=0 \}
\] by permuting coordinates. There is only one conjugacy class of reflections, given by the transpositions, so the parameter $c$ is a complex number. In the special case that $c=m/n$ for a positive integer $m$ coprime to $n$,~\cite[Theorems 1.2 and 1.6]{BEG} imply that there is an ideal $I_{m,n} \subseteq \CC[V_n]$ that is $H_{m/n}(S_n,V_n)$-stable, whose quotient $\CC[V_n]/I_{m,n}$ has dimension $m^{n-1}$, and is generated in degree $m$ by an $S_n$-equivariant regular sequence spanning a copy of the reflection representation of $V_n$. In particular, it follows that $I_{m,n} \subseteq \mathfrak{m}^m$, where $\mathfrak{m}$ is the ideal of functions vanishing at $0$. Thus, elements of $I_{m,n}$ vanish to order $m$ at $0$. 

\begin{definition}
\label{def:m-annihilate}
Given a subset $X \subseteq \CC^n$, let $I$ be the ideal of functions vanishing on $X$. 
If $f \in \CC[x_1,x_2,\dots,x_n]$ has $f \in I^\ell$, then we say that $f$ \emph{$\ell$-annihilates $X$}. 
\end{definition}

Note that the $\ell$-annihilation condition is stronger than requiring only that $f$ vanish to order $\ell$ at each point of $X$. This distinction will be important for us in the sequel.
 
In the remainder of this section, we describe the results of Etingof and Stoica regarding the structure of the polynomial representation of $H_{\ell/m}(S_n,\CC^n)$ in terms of the ideals $I_{\ell,m}$. 

\begin{notation}
\label{not:submodules def} 
Write $n = qm+r$ with nonnegative integers $q,r$ such that $0 \leq r < m$. Observe that $\CC[V_m]=\CC[x_1-x_2,x_2-x_3,\dots,x_{m-1}-x_m]$ embeds in $\CC[x_1,\dots,x_n]$ in $q$ obvious ways, corresponding to the first $m$ coordinates, the next $m$, and so on, so that the $k$th embedding is given by 
\[
x_i-x_{i+1} \mapsto x_{(k-1)m+i}-x_{(k-1)m+i+1}.
\]
For each $1 \leq s \leq q$, let $J_{s,\ell,m}$ be the ideal in $\CC[x_1,\dots,x_n]$ generated by the images of $I_{\ell,m}$ via the first $s$ of these embeddings. Since the set of common zeros of $I_{\ell,m}$ in $V_m$ consists exactly of the origin, on which each element of $I_{\ell,m}$ vanishes to order at least $\ell$, the set of common zeros of $J_{s,\ell,m}$ is
\begin{align*}
Z_{s,m}:=
\left\{
\begin{array}{l}
x_1=x_2=\cdots=x_m,\\
x_{m+1}=x_{m+2}=\cdots=x_{2m},\\
\ \ \vdots\\
x_{(s-1)m+1}=x_{(s-1)m+2}=\cdots=x_{sm}
\end{array}
\right\},
\end{align*}
and elements of $J_{s,\ell,m}$ $\ell$-annihilate this set, i.e., 
\[
J_{s,\ell,m} \subseteq I(Z_{s,m})^{\ell}.
\]
Now set
\[
I_{s,\ell,m}:=\bigcap_{w \in S_n} w(J_{s,\ell,m}).
\]
The zero set of $I_{s,\ell,m}$ is the set $X_{s,m}$ of points in $\CC^n$ that have at least $s$ different clusters of $m$ coordinates all equal to one another, and elements of $I_{s,\ell,m}$ $\ell$-annihilate each irreducible component of $X_{s,m}$. 
\end{notation}

\begin{example}
When $n=5$, the set $X_{2,2}$ consists of points of the form $(x,x,y,y,z)$ and all possible permutations. Note that in the particular case when $\ell=1$, the ideal $I_{s,1,m}$ is the ideal of the variety of points having $s$ clusters of $m$ variables all equal to one another. Thus its variety is a generalization of the $m$-equals arrangement, which occurs when $s=1$. 
\end{example}
 
The following theorem is due to Etingof and Stoica~\cite{EtSt}; it appears as Lemma 5.13 there, at the beginning of the proof of Theorem 5.10 (see also the proof of Remark 5.11). 
For uniformity of notation, we set 
$I_{0,\ell,m}:=0$ and 
$I_{q+1,\ell,m}:=\CC[x_1,x_2,\dots,x_n]$.

\begin{theorem}[Etingof--Stoica] 
\label{thm:submodule classification}
Every $H_{\ell/m}(S_n,\CC^n)$-submodule of $\CC[x_1,x_2,
\dots,x_n]$ is equal to $I_{s,\ell,m}$ for some $0 \leq s \leq q+1$.
\end{theorem}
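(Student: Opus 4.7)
The plan breaks into two parts: (a) verify each $I_{s,\ell,m}$ is an $H_{\ell/m}(S_n,\CC^n)$-submodule, and (b) show every submodule coincides with one of these. For (a), the $S_n$-invariance of $I_{s,\ell,m}=\bigcap_w w(J_{s,\ell,m})$ combined with the conjugation formula $w y_i w^{-1} = y_{w(i)}$ reduces Dunkl-stability to showing $y_1(I_{s,\ell,m}) \subseteq J_{s,\ell,m}$; applying $w$ then covers all other $y_i$. A clean route to this is to realize $I_{s,\ell,m}$ as the image in $\CC[x_1,\dots,x_n]$ of the module obtained by Bezrukavnikov--Etingof parabolic induction from the finite-dimensional simple module $L_{\ell/m}(\triv)^{\otimes s}$ of $H_{\ell/m}(S_m,V_m)^{\otimes s}$, tensored with the polynomial representation on the remaining free coordinates. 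Such an induced module is manifestly a Cherednik module by functoriality, and its image in $\CC[x_1,\dots,x_n]$ is the one predicted.

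For (b), the central step is to establish that the successive quotients $I_{s+1,\ell,m}/I_{s,\ell,m}$ are simple as $H_{\ell/m}(S_n,\CC^n)$-modules for $s=0,1,\dots,q$. I would apply the Bezrukavnikov--Etingof restriction functor at a generic point $p$ of the stratum $X_{s,m}\setminus X_{s+1,m}$, whose $S_n$-stabilizer is $S_m^s$. Under this restriction the quotient should identify with $L_{\ell/m}(\triv)^{\otimes s}$, which is finite-dimensional and simple by~\cite{BEG}. Exactness of restriction, together with the fact that it detects simplicity on modules whose support lies in the closure of the stratum, then yields simplicity of $I_{s+1,\ell,m}/I_{s,\ell,m}$ itself.

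Once the subquotients are simple, the classification is routine: for any nonzero submodule $M$, let $s$ be minimal with $M \subseteq I_{s+1,\ell,m}$. The nonzero image of $M$ in the simple quotient $I_{s+1,\ell,m}/I_{s,\ell,m}$ must equal the whole quotient, so $M+I_{s,\ell,m}=I_{s+1,\ell,m}$, and an induction on $s$ applied to $M\cap I_{s,\ell,m}$ forces $M=I_{s+1,\ell,m}$. The main obstacle throughout is the simplicity of the subquotients and the realization of $I_{s,\ell,m}$ as an induced module: both steps require careful deployment of the Bezrukavnikov--Etingof parabolic induction and restriction functors, and this is the substantive technical content of the Etingof--Stoica theorem.
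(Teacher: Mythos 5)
This theorem is not proved in the paper: it is quoted from Etingof--Stoica (the paper points to \cite{EtSt}, Lemma 5.13 inside the proof of Theorem 5.10), so your proposal is really an attempt to reprove that lemma, and it has to be judged on its own. Your part (a) is essentially fine in spirit, but not for the reason you give: the assertion that the Bezrukavnikov--Etingof induction of $L_{\ell/m}(\triv)^{\otimes s}$ maps onto $\CC[x_1,\dots,x_n]$ with image exactly $I_{s,\ell,m}$ is itself a nontrivial claim (one needs to produce the map via adjunction and then compute its image), and it is unnecessary -- the elementary argument of Lemma~\ref{lem:IOU}, combined with the $H_{\ell/m}(S_m,V_m)$-stability of $I_{\ell,m}$ from \cite{BEG}, already shows each $I_{s,\ell,m}$ is a submodule.

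The genuine gaps are in part (b). First, the principle ``restriction detects simplicity on modules supported in the closure of the stratum'' is not correct as stated: the functor $\mathrm{Res}$ of \cite{BE-induction} at a generic point of $X_{s,m}$ is exact and kills exactly the modules supported in $X_{s+1,m}$, so from the simplicity of $\mathrm{Res}(I_{s+1,\ell,m}/I_{s,\ell,m})\cong L_{\ell/m}(\triv)^{\otimes s}$ you can only conclude that for any proper nonzero submodule $N$ of $I_{s+1,\ell,m}/I_{s,\ell,m}$, either $N$ or the quotient by $N$ is supported in $X_{s+1,m}$. Excluding such boundary-supported constituents is the substantive content of the Etingof--Stoica lemma, and it does not follow formally (note that $I_{s,\ell,m}$ is not radical for $\ell>1$, so one cannot appeal to naive primary-decomposition reasoning); in \cite{EtSt} this is where category $\OO$ input such as Dunkl's classification of the singular polynomials at $c=\ell/m$ \cite{Dun} enters, since every nonzero submodule, and every lowest-degree piece of a subquotient, is detected by singular vectors. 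Second, the closing ``routine'' step is also incomplete: simplicity of the subquotients of a filtration does not make the submodule lattice a chain (a direct sum of two simples is a counterexample), so in the configuration $M+I_{s,\ell,m}=I_{s+1,\ell,m}$, $M\cap I_{s,\ell,m}=I_{t,\ell,m}$ with $t<s$ you still owe a contradiction. This particular point can be patched by comparing zero sets -- $V(M)\cup X_{s,m}=X_{t,m}$ forces $V(M)\supseteq X_{t,m}\supseteq X_{s,m}$, which is incompatible with $V(M)\cap X_{s,m}=X_{s+1,m}$ -- but as written the proposal neither notices the issue nor supplies the argument, and the simplicity step remains the real missing piece.
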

 
We now show that the symmetric elements $I_{s,\ell,m}^{S_n}$ of $I_{s,\ell,m}$ actually $(\ell+1)$-annihilate the irreducible components of $X_{s,m}$.

\begin{lemma}
\label{lem:symm lemma}
There is a containment $I_{s,\ell,m}^{S_n} \subseteq I(Z_{s,m})^{\ell+1}$. In other words, the elements of $I_{s,\ell,m}^{S_n}$ $(\ell+1)$-annihilate $Z_{s,m}$. 
\end{lemma}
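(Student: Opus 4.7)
My plan is to study the Taylor expansion of $f \in I_{s,\ell,m}^{S_n}$ at a generic point of $Z_{s,m}$ in directions normal to $Z_{s,m}$, and to kill the potential order-$\ell$ term using $S_n$-invariance together with the fact that the reflection representation of $S_m$ has no nonzero invariants.  Since $Z_{s,m}$ is a linear subspace, $f \in I(Z_{s,m})^{\ell+1}$ is equivalent to $f$ having order at least $\ell+1$ at the generic point of $Z_{s,m}$ with respect to any linear system of coordinates transverse to $Z_{s,m}$.

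\smallskip

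Fix a generic point $p = (z_1,\ldots,z_1,z_2,\ldots,z_2,\ldots,z_s,\ldots,z_s,x_{sm+1},\ldots,x_n)\in Z_{s,m}$ with the $z_i$ and the free $x_j$ (for $j > sm$) pairwise distinct.  Its pointwise stabilizer in $S_n$ is exactly $H := S_m^{(1)}\times\cdots\times S_m^{(s)}$, where $S_m^{(i)}$ permutes the $i$th cluster.  Introduce normal coordinates
\[
t_{i,k} := x_{(i-1)m+k} - \tfrac{1}{m}\sum_{j=1}^m x_{(i-1)m+j}, \qquad 1 \le i \le s, \ 1 \le k \le m,
\]
on which $S_m^{(i)}$ acts through its reflection representation, while the complementary tangential coordinates (cluster centers and free variables) are $H$-invariant.

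\smallskip

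Recall from the discussion of \cite{BEG} that $I_{\ell,m}$ is generated in degree $\ell$ by a regular sequence $g_1,\ldots,g_{m-1}$ spanning the reflection representation of $S_m$.  Their images $\tilde g_{i,\beta} := \iota_i(g_\beta)$ under the first $s$ embeddings generate $J_{s,\ell,m}$; each $\tilde g_{i,\beta}$ is a polynomial of pure $t$-degree $\ell$ in $t_{i,1},\ldots,t_{i,m}$, and for fixed $i$ they span the reflection representation of $S_m^{(i)}$.  Writing $f\in I_{s,\ell,m}^{S_n}\subseteq J_{s,\ell,m}$ as $f = \sum_{i,\beta} a_{i,\beta}\,\tilde g_{i,\beta}$, the $t$-degree-$\ell$ part of $f$ at $p$ equals
\[
\Phi := \sum_{i=1}^s \phi_i, \qquad \phi_i := \sum_\beta a_{i,\beta}\big|_{t=0}\,\tilde g_{i,\beta},
\]
with $\phi_i \in (\text{reflection rep.\ of } S_m^{(i)}) \otimes \CC[\text{tangential coords}]$.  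Because these subspaces for distinct $i$ involve disjoint sets of $t$-variables, the sum $\Phi = \sum \phi_i$ is direct.

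\smallskip

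Finally, $S_n$-invariance of $f$ together with $H \subseteq \mathrm{Stab}(p)$ forces every homogeneous piece of the Taylor expansion of $f$ at $p$ to be $H$-invariant; in particular $\Phi$ is $H$-invariant.  For $\sigma \in S_m^{(i)}$, $\sigma$ acts trivially on $\phi_{i'}$ for $i' \ne i$ (disjoint variables), so $\sigma\cdot\phi_i = \phi_i$, meaning each $\phi_i$ is $S_m^{(i)}$-invariant.  Since the reflection representation of $S_m$ has no nonzero invariants, we conclude $\phi_i = 0$ for each $i$, hence $\Phi = 0$, giving $f \in I(Z_{s,m})^{\ell+1}$.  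The main obstacle I anticipate is the bookkeeping in the step that produces $\Phi$: one has to check that the $t$-degree-$\ell$ part of $f$ is genuinely captured by the displayed formula and that higher-degree elements of $J_{s,\ell,m}$ (beyond the degree-$\ell$ generators) contribute nothing below order $\ell+1$.  Both are handled by the observations that each $\tilde g_{i,\beta}$ is homogeneous of $t$-degree exactly $\ell$ and that the relevant subspaces for different $i$ intersect only in $0$; once this is in place, the representation-theoretic input $(V_m)^{S_m} = 0$ finishes the argument.
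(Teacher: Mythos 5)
Your argument is correct and rests on exactly the same ingredients as the paper's proof: $J_{s,\ell,m}$ is generated in degree $\ell$ by $s$ copies of the reflection representation $V_m$ (one per cluster), $f$ is invariant under the Young subgroup $S_m\times\cdots\times S_m$, and $V_m^{S_m}=0$ forces one extra order of vanishing along $Z_{s,m}$. The only difference is bookkeeping — you kill the leading normal-degree form $\Phi$ cluster by cluster, whereas the paper equivariantly adjusts the coefficients $g_i^{(p)}$ so that they themselves lie in the ideal of the cluster differences — so this is essentially the paper's proof.
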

\begin{proof}
We use the fact that the ideal $I_{\ell,m} \subseteq \CC[V_m]$ is generated in degree $\ell$ by a copy of the reflection representation $V_m$ of $S_m$. Given this, the ideal $J_{s,\ell,m} \supseteq I_{s,\ell,m}$ is generated in degree $\ell$ by $s$ different copies $V_m^{(1)},\dots,V_m^{(s)}$ of this reflection representation, spanned by polynomials 
\[
f_1^{(1)},\dots,f_{m-1}^{(1)},\dots,f_1^{(s)},\dots,f_{m-1}^{(s)}.
\]
Now if $f=\sum_{i,p} g_i^{(p)} f_i^{(p)}$ is $S_n$-invariant, it is in particular invariant for the product $S_m \times \cdots \times S_m$ of $s$ copies of $S_m$, and it follows that we may assume that for each $p$, the polynomials $g_i^{(p)}$, for $1 \leq i \leq m-1$, span a copy of $V_m$ in the variables $x_{(p-1)m+1}, \dots, x_{(p-1)m+m}$. Therefore they are in the ideal generated by the differences $x_{(p-1)m+j}-x_{(p-1)m+j+1}$. From Notation~\ref{not:submodules def}, we have $f_i^{(p)} \in I(Z_{s,m})^\ell$, so it now follows that $f \in I(Z_{s,m})^{\ell+1}$, as desired.
\end{proof}
 
In order to prove Theorem~\ref{thm:intro}, the version of our main theorem stated in the introduction, we need the following simple lemma.

\begin{lemma} 
\label{lem:order of vanishing}
Suppose $f$ is a function that $\ell$-annihilates the set $Z_{s,m}$. Specializing the first $m$ variables to $z_1$, the next $m$ variables to $z_2$, and so on, up to the $s$th cluster of $m$ variables, for which we set the first $m-1$ equal to $z=z_s$, the specialized function $f(z_1,\dots,z_1,z_2,\dots,z_2,\dots,z,\dots,z,x_{sm},\dots,x_n)$ is divisible by $(x_{sm}-z)^{\ell}$. Furthermore, if $f$ $\ell$-annihilates each $w(Z_{s,m})$ for $w \in S_n$, then this specialization is divisible by the product
\[
\prod_{i=sm}^n (x_i-z)^{\ell}.
\]
\end{lemma}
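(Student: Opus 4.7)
The plan is to realize the specialization as a $\CC$-algebra homomorphism and analyze its effect on a natural set of generators of $I(Z_{s,m})$. Let
\[
\phi : \CC[x_1,\dots,x_n] \to \CC[z_1,\dots,z_{s-1},z,x_{sm},\dots,x_n]
\]
be the substitution sending $x_j \mapsto z_p$ for $(p-1)m < j \leq pm$ with $1 \leq p \leq s-1$, sending $x_j \mapsto z$ for $(s-1)m < j < sm$, and fixing $x_j$ for $sm \leq j \leq n$. Then $\phi(f)$ is exactly the specialized function appearing in the statement.

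Next I would use the fact that $I(Z_{s,m})$ is generated by the linear forms $x_j - x_k$ with $j,k$ in a common cluster of $Z_{s,m}$. Under $\phi$, any such generator whose two indices both map to the same constant is sent to $0$; the only generators with nonzero image are those in the $s$th cluster for which exactly one of $j,k$ equals $sm$, and their image is $\pm(x_{sm}-z)$. Thus $\phi(I(Z_{s,m}))$ lies in the principal ideal $(x_{sm}-z)$, and since $\phi$ is a ring homomorphism,
\[
\phi(I(Z_{s,m})^\ell) \subseteq \phi(I(Z_{s,m}))^\ell \subseteq (x_{sm}-z)^\ell,
\]
which applied to $f \in I(Z_{s,m})^\ell$ gives the first claim.

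For the second statement, for each $i \in \{sm,sm+1,\dots,n\}$ I would let $w_i \in S_n$ be the transposition swapping $sm$ and $i$ (with $w_{sm}$ the identity). The ideal $I(w_i(Z_{s,m}))$ is generated by linear forms $x_j - x_k$ with $\{j,k\}$ inside one of the first $s-1$ clusters or inside the modified last cluster $\{(s-1)m+1,\dots,sm-1,i\}$. The same analysis shows that $\phi$ kills every such generator except those in the modified last cluster that involve $i$, whose image is $\pm(x_i-z)$; consequently $\phi(f) \in (x_i-z)^\ell$ for every such $i$. Since the linear polynomials $x_{sm}-z, x_{sm+1}-z, \dots, x_n-z$ are pairwise non-associate irreducibles in the UFD $\CC[z_1,\dots,z_{s-1},z,x_{sm},\dots,x_n]$, the individual divisibilities combine to give divisibility by the product $\prod_{i=sm}^n (x_i-z)^\ell$. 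The whole argument is essentially bookkeeping of generators under $\phi$, so there is no serious obstacle; the only observation one needs is that the chosen linear generators of $I(Z_{s,m})$ are collapsed by $\phi$ either to $0$ or to (a sign times) the single form $x_{sm}-z$.
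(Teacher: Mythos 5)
Your proof is correct and follows essentially the same route as the paper: show the specialization of anything ($\ell$-th power of things) vanishing on $Z_{s,m}$, resp.\ on each $w(Z_{s,m})$, is divisible by $(x_{sm}-z)^{\ell}$, resp.\ by each $(x_i-z)^{\ell}$, and then combine by coprimality. The only cosmetic difference is that you work with the explicit linear generators of $I(Z_{s,m})$ and explicit transpositions, whereas the paper only uses that any function vanishing on $Z_{s,m}$ specializes to something divisible by $x_{sm}-z$; both are fine.
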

\begin{proof}
Let $g$ be a function vanishing on $Z_{s,m}$. Specializing $g$ as we specialized $f$ gives a function divisible by $x_{sm}-z$. Since $f$ is a linear combination of products of $\ell$ such $g$'s, the first statement of the lemma is proved. Since the functions $(x_i-z)^{\ell}$ are pairwise coprime, the second statement follows as well.
\end{proof} 

We close this section with a lemma relating modules over a Cherednik algebra $H_c(W,V)$ to modules over $H_c(W',V')$, where $W' \subseteq W$ is the subgroup of $W$ fixing some point $p \in V$ (such subgroups are known as \emph{parabolic subgroups}) and $V'$ is an orthogonal complement to the fixed point space $V^{W'}$ with respect to a $W$-invariant positive definite Hermitian form.

\begin{lemma} 
\label{lem:IOU}
Let $I'$ be an $H_c(W',V')$-submodule of $\CC[V']$, and let $J$ be the ideal generated by its image in $\CC[V]=\CC[V'] \otimes_\CC \CC[V^{W'}]$. Then 
\[
I := \bigcap_{w \in W} w(J)
\] is an $H_c(W,V)$-submodule of $\CC[V]$.  
\end{lemma}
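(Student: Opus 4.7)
My plan proceeds in three steps. First, I observe that $I$ is tautologically a $W$-stable ideal of $\CC[V]$ (intersection of the ideals $w(J)$), so the lemma reduces to stability under every Dunkl operator $D_y$, $y\in V$. Using $W$-equivariance $wD_yw^{-1}=D_{w(y)}$ combined with the $W$-stability of $I$, this further reduces to showing $D_y(I)\subseteq J$ for every $y\in V$: once this holds, for each $w\in W$ and $f\in I$ one has $w^{-1}D_y(f)=D_{w^{-1}(y)}(w^{-1}(f))\in J$ since $w^{-1}(f)\in I$, hence $D_y(f)\in w(J)$ for all $w$ and therefore $D_y(f)\in I$.

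Second, I expand $D_y(f)$ explicitly. Fix $y\in V$, $f\in I$, decompose $y=y'+y''$ with $y'\in V'$, $y''\in V^{W'}$, and partition the reflections as $T=T'\sqcup(T\setminus T')$ with $T':=T\cap W'$. The elementary observations that for $s\in T'$ the root $\alpha_s$ lies in $(V')^*$ (so $\alpha_s(y'')=0$) and $s$ acts trivially on $\CC[V^{W'}]$, combined with an expression $f=\sum_i g_ih_i$ where $g_i\in I'$ and $h_i\in\CC[V^{W'}]$ (available since $f\in I\subseteq J$), yield the decomposition
\[
D_y(f)=\sum_i D^{W'}_{y'}(g_i)\,h_i+\sum_i g_i\,\partial_{y''}(h_i)-\sum_{s\in T\setminus T'}c_s\,\alpha_s(y)\,\frac{f-s(f)}{\alpha_s},
\]
where $D^{W'}_{y'}$ denotes the Dunkl operator for $H_c(W',V')$. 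The first sum lies in $J$ because $I'$ is $H_c(W',V')$-stable (so each $D^{W'}_{y'}(g_i)\in I'$), and the second lies in $J$ because $\partial_{y''}$ differentiates only in the $V^{W'}$-factor.

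Third---and this is the step I expect to be the main obstacle---I must show each $(f-s(f))/\alpha_s$ with $s\in T\setminus T'$ lies in $J$. From $W$-stability of $I$ one has $s(f)\in I$, so $f-s(f)\in I\subseteq J$, and the reflection identity gives $\alpha_s\mid f-s(f)$ as a polynomial. The key claim is that $\alpha_s$ acts as a non-zero-divisor on $\CC[V]/J\cong(\CC[V']/I')\otimes_\CC\CC[V^{W'}]$. Writing $\alpha_s=\alpha_s'+\alpha_s''$ with $\alpha_s'\in(V')^*$ and $\alpha_s''\in(V^{W'})^*$, the condition $s\notin W'$ forces $\alpha_s''\neq0$, since $\alpha_s''=0$ would put $V^{W'}\subseteq\mathrm{fix}(s)$ and hence $s\in W'$. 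Choosing coordinates on $V^{W'}$ in which $\alpha_s''$ is one of the coordinates, multiplication by $\alpha_s$ on $(\CC[V']/I')\otimes\CC[V^{W'}]$ becomes multiplication by a linear monic polynomial over the polynomial ring $(\CC[V']/I')[\text{remaining coordinates}]$, which is a non-zero-divisor by a standard leading-coefficient argument valid for any coefficient ring. Consequently $\alpha_s\cdot(f-s(f))/\alpha_s=f-s(f)\in J$ forces $(f-s(f))/\alpha_s\in J$; summing, the third sum lies in $J$ and $D_y(f)\in J$, completing the proof. The degenerate case $V^{W'}=0$ forces $W'=W$ and makes the lemma tautological.
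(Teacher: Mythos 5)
Your proof is correct and follows essentially the same route as the paper's: reduce by $W$-equivariance to showing $D_y(I)\subseteq J$, expand the Dunkl operator along $y=y'+y''$ and $T=T'\sqcup(T\setminus T')$ using $f=\sum_i g_ih_i$, and handle the reflections outside $W'$ via the fact that $\alpha_s$ is a non-zero-divisor modulo $J$. The only cosmetic difference is in that last step, where the paper changes basis so that $\alpha_s$ becomes a polynomial variable over $\CC[V']$, while you keep the orthogonal coordinates and use a monic leading-coefficient argument; the two are interchangeable.
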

\begin{proof}
First, observe that $I$ is evidently a $W$-stable ideal in $\CC[V]$. Next, given $f \in I$, we check that $y(f) \in J$ for all $y \in V$. Write $y=y_0+y_1$ with $y_0 \in V'$ and $y_1 \in V^{W'}$, and let $T' := W' \cap T$ be the set of reflections in $W'$. Writing $f=\sum_i f_i g_i$ with $f_i \in \CC[V^{W'}]$ and $g_i \in I'$, it follows from the definition of $y(f)$ that 
\begin{align*}
y(f)&=\partial_{y_0}(f)-\sum_{s \in T'} c_s \la \alpha_s,y_0 \ra \frac{f-s(f)}{\alpha_s}+\partial_{y_1}(f)-\sum_{s \in T \setminus T'} c_s \la \alpha_s,y \ra \frac{f-s(f)}{\alpha_s} \\
&=\sum_i f_i \left(\partial_{y_0}(g_i)-\sum_{s \in T'} c_s \la \alpha_s,y_0 \ra \frac{g_i-s(g_i)}{\alpha_s} \right) + \sum_i g_i \partial_{y_1}(f_i)-\sum_{s \in T \setminus T'} c_s \la \alpha_s,y \ra \frac{f-s(f)}{\alpha_s}.
\end{align*} 
In the last expression, the term in parentheses within the first sum is in $I'$ because $I'$ is $H_c(W',V')$-stable, implying that the first sum is in $J$. The second sum is evidently in $J$, and the last sum is in $J$ because $f \in J$, $s(f) \in J$, and $\alpha_s$ is a nonzerodivisor on $\CC[V]/J$ for all $s \notin T'$. The fact that $\alpha_s$ is a nonzerodivisor on $\CC[V]/J$ may be seen as follows. First, using that $s \notin W'$, we choose a basis $x_1,\dots,x_n$ of $V^*$ so that $\CC[V']=\CC[x_1,x_2,\dots,x_m]$ and $\alpha_s=x_{m+1}$. Now the ideal $J$ is generated by elements of $\CC[x_1,\dots,x_m]$, so $x_{m+1}=\alpha_s$ is a non-zero divisor modulo $J$, establishing the claim. Since we have shown that $y(f) \in J$ for all $y \in V$, it follows that $y(f)=w(w^{-1}(y))(f) \in w(J)$ for all $w \in W$ and $y \in V$.
\end{proof}

\section{Jack polynomials} 
\label{sec:symm}
 
We first describe a partial order on sequences $\mu=(\mu_1,\mu_2,\dots,\mu_n) \in \ZZ_{\geq 0}^n$. Let $\mu^+$ be the non-increasing (partition) rearrangement of $\mu$ and let $w_\mu$ be the permutation with the most inversions such that $w_\mu \mu$ is non-decreasing (an anti-partition). The permutation $w_\mu$ may be described as a ``rank function'' because it ranks the elements of the sequence $(\mu_1,\dots,\mu_n)$ from smallest to largest, where ties are broken by considering as smaller entries farther to the right. 
For example, if $\mu=(4,3,1,1,0,3)$, then $w_\mu=(6,5,3,2,1,4)$.
We say that $\mu < \nu$ if either $\mu^+<\nu^+$ in dominance order, or $\mu^+=\nu^+$ and $w_\mu < w_\nu$ in Bruhat order. We write $\mu^-$ for the non-decreasing rearrangement of $\mu$. 

Let $s_{ij}$ be the transposition that swaps $i$ and $j$ and leaves all other numbers fixed. Let $z_i:=y_i x_i+c \phi_i$, where $\phi_i:=\sum_{1 \leq j < i} s_{ij}$ is the $i$th \emph{Jucys--Murphy--Young element} in $\CC S_n$. For any $c \in \CC$, the subalgebra of $H_c(S_n,\CC^n)$ generated by $z_1,\dots,z_n$ is commutative and acts in an upper-triangular fashion on the polynomial representation $\CC[x_1,\dots,x_n]$, with respect to the basis $x^\mu$ of monomials and the ordering $<$ defined above. This fact is classical, and with our notation, it follows for instance from~\cite[Theorem 5.1]{Gri}. 
Moreover, if we let $c$ be a formal variable, then the polynomial representation $\CC(c)[x_1,\dots,x_n]$ is diagonalizable with respect to the action of $z_1,\dots,z_n$, and we write $f_\mu$ for the simultaneous eigenfunction with leading term $x^\mu$. This is a polynomial in $x_1,\dots,x_n$ whose coefficients are rational functions of $c$. According to~\cite[Theorem 5.1]{Gri}, the eigenvalue is given by
\begin{equation} 
\label{eig equation}
z_i f_\mu=\left(\mu_i+1-(w_\mu(i)-1) c \right) f_\mu.
\end{equation}
 
To explicitly describe lowest weight generators for the submodules constructed by Etingof and Stoica, we follow~\cite{Dun}.  

\begin{notation}
\label{not:partition indexing} 
For each $1 \leq s \leq q$, divide $n-(sm-1)$ by $m-1$ to obtain a quotient $q_s$ and remainder $r_s$, so that 
\[
n-(sm-1)=q_s(m-1)+r_s \quad \hbox{with \,$0 \leq r_s < m-1$.} 
\]
Then set 
\begin{align*}
\tau_{s,m}&:=(sm-1,(m-1)^{q_s},r_s)
\quad \text{and}
\\*
\mu_{s,\ell,m}&:=((\ell(s+q_s))^{r_s},(\ell(s+q_s-1))^{m-1}, (\ell(s+q_s-2))^{m-1},\dots, (\ell s)^{m-1},0^{sm-1}).
\end{align*}
We use partition notation for exponents: an exponent $e$ indicates an entry repeated $e$ times.
\end{notation}
 
\begin{proposition}
\label{prop:Jack generators}
There is an equality 
$I_{s,\ell,m} 
= H_{\ell/m}(S_n,\CC^n) f_{\mu_{s,\ell,m}}$. 
\end{proposition}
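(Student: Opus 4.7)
The plan is to combine the Etingof--Stoica classification (Theorem~\ref{thm:submodule classification}) with properties of the non-symmetric Jack polynomial $f_{\mu_{s,\ell,m}}$ built from Dunkl's construction~\cite{Dun}. By Theorem~\ref{thm:submodule classification}, the $H_{\ell/m}(S_n,\CC^n)$-submodule generated by $f_{\mu_{s,\ell,m}}$ coincides with $I_{s',\ell,m}$ for some $0 \le s' \le q+1$. Since the $I_{s',\ell,m}$ form a chain (by construction as intersections of ideals generated by nested families of embedded copies of $I_{\ell,m}$) with $I_{0,\ell,m}=0$ and $I_{q+1,\ell,m}=\CC[x_1,\dots,x_n]$, the proposition reduces to verifying the two statements
\begin{enumerate}
\item[(a)] $f_{\mu_{s,\ell,m}} \in I_{s,\ell,m}$, and
\item[(b)] $f_{\mu_{s,\ell,m}} \notin I_{s-1,\ell,m}$.
\end{enumerate}

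Before this, I would confirm that $f_{\mu_{s,\ell,m}}$ is well-defined at the specialization $c=\ell/m$. The Knop--Sahi recursion produces $f_\mu$ with denominators that are explicit in the eigenvalues appearing in~\eqref{eig equation}, and the staircase shape of $\mu_{s,\ell,m}$---distinct levels spaced by exactly $\ell$, repeated blocks of length $m-1$, and a trailing block of $sm-1$ zeros---is engineered so that the eigenvalue collisions which would produce poles do not occur at $c=\ell/m$. A case-by-case application of~\eqref{eig equation}, together with the explicit form of the rank function $w_{\mu_{s,\ell,m}}$, should exclude poles.

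For (a), I would invoke Dunkl's explicit construction~\cite{Dun} from which the indexing partition $\mu_{s,\ell,m}$ is taken: that construction exhibits $f_{\mu_{s,\ell,m}}$ in a form that $\ell$-annihilates each permuted copy $w(Z_{s,m})$ of the cluster arrangement, so that $f_{\mu_{s,\ell,m}}\in\bigcap_w w(J_{s,\ell,m})=I_{s,\ell,m}$. For (b), I would evaluate $f_{\mu_{s,\ell,m}}$ near a point $p\in X_{s-1,m}\setminus X_{s,m}$---one with exactly $s-1$ clusters of $m$ equal coordinates and all other coordinates pairwise distinct. Since elements of $I_{s-1,\ell,m}$ must $\ell$-annihilate $X_{s-1,m}$ at $p$, it suffices to check that $f_{\mu_{s,\ell,m}}$ does not vanish to order $\ell$ at $p$; the presence of the leading monomial $x^{\mu_{s,\ell,m}}$ and the small size of $|\mu_{s,\ell,m}|$ relative to the vanishing required at $p$ should provide such a direct contradiction. (In the base case $s=1$, the statement $f_{\mu_{1,\ell,m}}\notin I_{0,\ell,m}=0$ is trivial.)

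The main obstacle is (a): translating Dunkl's formula into the refined $\ell$-annihilation property defining $I_{s,\ell,m}$, rather than just set-theoretic vanishing on $X_{s,m}$. If Dunkl's expression factors through Vandermonde-like products in the cluster blocks, then the $\ell$-fold vanishing is visible; otherwise one should bootstrap from the known structure of $I_{\ell,m}\subseteq\CC[V_m]$, which is generated in degree $\ell$ by a copy of the reflection representation of $S_m$, propagate it through the diagonal embeddings to $J_{s,\ell,m}$, and then use the $S_n$-action together with the intertwiners in the Cherednik algebra to conclude that $f_{\mu_{s,\ell,m}}$ lies in the full intersection $I_{s,\ell,m}$.
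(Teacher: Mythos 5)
Your reduction is sound: since the $I_{s',\ell,m}$ form a chain and Theorem~\ref{thm:submodule classification} says $H_{\ell/m}(S_n,\CC^n)f_{\mu_{s,\ell,m}}$ must be one of them, it would indeed suffice to prove (a) $f_{\mu_{s,\ell,m}}\in I_{s,\ell,m}$ and (b) $f_{\mu_{s,\ell,m}}\notin I_{s-1,\ell,m}$. The genuine gap is that neither (a) nor (b) is actually established, and (a) in particular does not follow from what \cite{Dun} provides. Dunkl's theorem gives exactly two facts: at $c=\ell/m$ the polynomial $f_{\mu_{s,\ell,m}}$ is well defined and \emph{singular} ($yf_{\mu_{s,\ell,m}}=0$ for all $y\in\CC^n$), and its $\CC S_n$-span is the Specht module $S^{\tau_{s,m}}$. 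There is no ``explicit form that $\ell$-annihilates each $w(Z_{s,m})$'' to invoke; identifying the submodule generated by this singular vector with $I_{s,\ell,m}$ is precisely the content of the proposition, so your fallback (``bootstrap from $I_{\ell,m}$ and use intertwiners'') is circular as stated. Step (b) is also unsupported: vanishing to order $\ell$ at a single point of $X_{s-1,m}$ only forces degree $\ge\ell$, and $|\mu_{s,\ell,m}|\ge\ell$ always, so the ``small degree versus required vanishing'' heuristic gives no contradiction; ruling out membership in $I_{s-1,\ell,m}$ by direct order-of-vanishing estimates would require concrete information about $f_{\mu_{s,\ell,m}}$ that you have not produced.

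The paper's proof avoids (a) and (b) entirely and is worth internalizing. Singularity implies $H_{\ell/m}(S_n,\CC^n)f_{\mu_{s,\ell,m}}=\CC[x_1,\dots,x_n]\,\CC S_n\,f_{\mu_{s,\ell,m}}$, a proper nonzero submodule whose minimal degree is exactly $|\mu_{s,\ell,m}|$. By Theorem~\ref{thm:submodule classification} the proper nonzero submodules are precisely the chain $I_{1,\ell,m}\subsetneq\cdots\subsetneq I_{q,\ell,m}$, so the $q$ submodules generated by the $f_{\mu_{s,\ell,m}}$ (which are pairwise distinct because their minimal degrees $|\mu_{s,\ell,m}|$ are strictly decreasing in $s$) exhaust the chain; since minimal degree is non-increasing along the chain, the matching is forced to be $s\mapsto s$. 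In other words, the well-definedness you worry about is taken from \cite{Dun}, and the identification is a counting/degree argument that uses all values of $s$ simultaneously rather than a direct verification of $\ell$-annihilation for each one. If you want to salvage your individual-$s$ strategy, you would need an independent computation of the minimal degree of $I_{s-1,\ell,m}$ or a genuine proof of the $\ell$-fold vanishing of $f_{\mu_{s,\ell,m}}$ along $Z_{s,m}$; neither is currently in your write-up.
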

\begin{proof}
 According to~\cite{Dun}, when $c=\ell/m$, we have that $y f_{\mu_{s,\ell,m}}=0$ for all $y \in \CC^n$ and the $\CC S_n$-span of the non-symmetric Jack polynomial $f_{\mu_{s,\ell,m}}$ is isomorphic, as a $\CC S_n$-module, to the Specht module $S^{\tau_{s,m}}$. The result now follows from Theorem~\ref{thm:submodule classification} and the observation that the degree of $f_{\mu_{s,\ell,m}}$ decreases as $s$ increases. 
\end{proof}
 
The following result is a weak and non-symmetric version of Theorem~\ref{thm:intro}.

\begin{theorem} 
\label{thm:non-symm vanishing}
Suppose that $c=\ell/m$ for positive coprime integers $\ell$ and $m$ with $m \geq 2$. Suppose that a sequence $\mu$ satisfies the following: 
\begin{itemize}
\item[(a)] 
for $m+1 \leq j \leq sm-1$, 
\[
\mu^-_j \leq \mu^-_{j-m}+\ell, 
\ \ \text{with equality implying that}\ \ 
w_\mu^{-1}(j)> w_\mu^{-1}(j-m),
\]
\item[(b)] 
for $sm \leq j \leq sm+m-2$, 
\[
\qquad\  
\mu^-_j \geq \mu^-_{j-(m-1)}+s\ell, 
\ \ \text{with equality implying that}\ \ 
w_\mu^{-1}(j)< w_\mu^{-1}(j-(m-1)), 
\  \text{and}
\]
\item[(c)] 
for $j \geq sm+m-1$, 
\[
\ \ 
\mu^-_j \geq \mu^-_{j-(m-1)}+\ell, 
\ \ \text{with equality implying that}\ \ 
w_\mu^{-1}(j)< w_\mu^{-1}(j-(m-1)).
\]
\end{itemize}  
Then the non-symmetric Jack polynomial $f_\mu$ 
is well-defined and belongs to the ideal 
$I_{s,\ell,m}$; 
in particular, $f_\mu$ $\ell$-annihilates each 
irreducible component of $X_{s,m}$. 
\end{theorem}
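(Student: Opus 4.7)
The strategy is to bootstrap from the singular vector $f_{\mu_{s,\ell,m}}$, which by Proposition~\ref{prop:Jack generators} generates $I_{s,\ell,m}$ as an $H_{\ell/m}(S_n,\CC^n)$-module. Once $f_\mu \in I_{s,\ell,m}$ is established, the $\ell$-annihilation of each irreducible component of $X_{s,m}$ follows directly from the description of $I_{s,\ell,m}$ in Notation~\ref{not:submodules def}. So it suffices to show (i) that $f_\mu$ is regular at $c = \ell/m$, meaning the generic-$c$ simultaneous eigenfunction has no pole upon this specialization, and (ii) that this specialization lies in the Cherednik submodule generated by $f_{\mu_{s,\ell,m}}$.

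Both tasks would be handled via the Knop--Sahi recursion, which relates different non-symmetric Jack polynomials by applying intertwining operators from $H_c(S_n,\CC^n)$: an affine raising operator $\Phi$ of the form $x_n s_{n-1}\cdots s_1$, sending $f_\nu$ to a scalar multiple of $f_{(\nu_n+1,\nu_1,\ldots,\nu_{n-1})}$, and transposition intertwiners $\sigma_i$ built from $s_i$ together with a rational correction whose denominator is $\zeta_i(\nu)-\zeta_{i+1}(\nu)$, sending $f_\nu$ to a scalar multiple of $f_{s_i\nu}$. Here $\zeta_j(\nu) := \nu_j + 1 - (w_\nu(j)-1)c$ is the $z_j$-eigenvalue from~\eqref{eig equation}. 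Because $\Phi$ and $\sigma_i$ lie in the Cherednik algebra, they automatically preserve $I_{s,\ell,m}$; the entire obstruction to well-definedness at $c = \ell/m$ is concentrated in the denominators
\[
\zeta_i(\nu) - \zeta_{i+1}(\nu) = (\nu_i-\nu_{i+1}) - (w_\nu(i)-w_\nu(i+1))c
\]
of the scalar factors produced along the way.

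My concrete plan is: first, verify directly that $\mu_{s,\ell,m}$ itself satisfies hypotheses (a)--(c); second, show that any $\mu$ satisfying (a)--(c) can be reached from $\mu_{s,\ell,m}$ by a sequence of elementary moves $\nu \mapsto s_i\nu$ and $\nu \mapsto \Phi\nu$ that preserves (a)--(c) at every step; third, translate the gap inequalities $\mu^-_j \leq \mu^-_{j-m} + \ell$ and $\mu^-_j \geq \mu^-_{j-(m-1)} + \ell$ (respectively $\geq \mu^-_{j-(m-1)} + s\ell$) into conditions about which linear combinations $(\nu_i-\nu_{i+1}) - (w_\nu(i)-w_\nu(i+1))c$ can vanish at $c = \ell/m$, and check that the ``equality implies Bruhat-inversion'' clauses are exactly what is needed to ensure that in every such resonant case, the corresponding Knop--Sahi scalar remains regular and nonzero at $c = \ell/m$. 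Since $\mu^-_j$ indexed by the non-decreasing rearrangement corresponds under $w_\mu$ to $\mu_i$ indexed by position, the gaps appearing in (a)--(c) are precisely the linear combinations whose vanishing at $c = \ell/m$ would otherwise produce a pole in the recursion.

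The main obstacle is the combinatorial bookkeeping in step three. The three hypotheses (a), (b), (c) correspond to the three regimes of resonance among the eigenvalues $\zeta_j$ at $c = \ell/m$, according to whether the indices involved lie inside the low block of size $sm-1$, straddle the boundary of the cluster region, or lie past it, with the numerical gap ($\ell$, $s\ell$, or $\ell$ again) forced by the specific multiplicities $w_\nu(i)-w_\nu(j)$ that can arise in each regime. The Bruhat direction in each clause of (a)--(c) is tuned to cancel the denominators that would otherwise blow up in that regime. I would carry out the argument by induction on the length of a reduced path from $\mu_{s,\ell,m}$ to $\mu$ through admissible sequences, checking at each elementary move both that (a)--(c) are preserved and that the intertwiner specializes cleanly; the transitions between the three regimes—where a swap or shift changes which clause of (a)--(c) is active—will be the most delicate part of the verification.
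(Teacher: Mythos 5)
Your proposal is correct and follows essentially the same route as the paper: start from the singular polynomial $f_{\mu_{s,\ell,m}}$ generating $I_{s,\ell,m}$, climb to $f_\mu$ via the Knop--Sahi recursion, and check that the hypotheses (a)--(c) are exactly what rules out vanishing denominators $(\mu_i-\mu_{i+1})-(w_\mu(i)-w_\mu(i+1))c$ at $c=\ell/m$ along the way. The bookkeeping you identify as the main obstacle is handled in the paper by the monotone invariant $R(\mu)$ (with $R(\mu_{s,\ell,m})\subseteq R(\mu)$ encoding (b)--(c) and the absence of non-semisimple triples in the difference encoding (a)), which automatically guarantees both reachability and regularity at every intermediate step.
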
 
\begin{proof} 
We describe a recursion that constructs the polynomials $f_\mu$, as in~\cite{KnSa}. Set
\[
\sigma_i := s_i+\frac{c}{z_i-z_{i+1}}, \text{ for $i=1,\dots,n-1$}
\]
and $\Phi := x_n s_{n-1} s_{n-2} \cdots s_1$. For $\mu = (\mu_1,\mu_2,\dots,\mu_n) \in \ZZ^n$, set 
$\phi \mu := (\mu_2,\mu_3,\dots,\mu_n,\mu_1+1)$. Then~\cite[Lemma 5.3]{Gri} shows that
\begin{equation} 
\label{eq:recursion} 
\Phi f_\mu=f_{\phi \mu}, 
\quad \text{and}\quad
\text{if $\mu_i< \mu_{i+1}$, then }  \sigma_i f_\mu=f_{s_i \mu}.
\end{equation}  
For special positive rational values of $c$, the coefficients of $f_\mu$ may have poles. However, assuming that none of the coefficients of $f_\mu$ have a pole at $c \in \CC$, the recursions above show that certain other non-symmetric Jack polynomials will be well-defined at $c$ and belong to the $H_c(S_n,\CC^n)$-submodule generated by $f_\mu$. For instance, $f_{s_i \mu}=\sigma_i f_\mu$ is well-defined if $f_\mu$ is well-defined, provided that the eigenvalues of $z_i$ and $z_{i+1}$ on $f_\mu$ are distinct. In light of~\eqref{eig equation}, it is enough that
\begin{equation}
\mu_i-\mu_{i+1} \neq (w_\mu(i)-w_\mu(i+1))c.
\end{equation}

To describe all of the non-symmetric Jack polynomials that can be obtained in this way starting with $f_\mu$, we make the following definitions. For each $\mu \in \ZZ_{\geq 0}^n$, we write $\mu^-:=w_\mu \mu$ for its non-decreasing rearrangement, and we define a set $R(\mu)$ as follows. 
First, a triple $(i,j,k)$ with $1 \leq i < j \leq n$ and $k \in \ZZ_{>0}$ is in $R(\mu)$ if and only if $\mu^-_j >\mu_i^-+k$ or $\mu^-_j=\mu^-_i+k$ and $w_\mu^{-1}(j) < w_\mu^{-1}(i)$. 
Second, a pair $(i,k)$ with $1 \leq i \leq n$ and $k \in \ZZ_{>0}$ is in $R(\mu)$ if and only if $\mu^-_i \geq k$. 
(These definitions are \cite[(7.17)]{Gri}, specialized to our situation; in fact, we could use the machinery developed there to finish the proof immediately. However, since the combinatorics is somewhat simpler in our situation, the extra generality is unnecessary, and we prefer to outline a self-contained argument.) Finally, a triple $(i,j,k)$ as above is \emph{non-semisimple} if $k=\ell(j-i)/m$.

With these definitions, the hypotheses (b) and (c) on $\mu$ in the statement of the theorem then amount to $R(\mu_{s,\ell,m}) \subseteq R(\mu)$, while (a) implies that $R(\mu) \setminus R(\mu_{s,\ell,m})$ does not contain any non-semisimple triples. We will show that this implies that $f_\mu$ can be obtained from $f_{\mu_{s,\ell,m}}$ by use of the recursions \eqref{eq:recursion}.

Now let $\mu, \nu \in \ZZ_{\geq 0}^n$ be arbitrary. It follows from the definitions that $R(\mu)=R(\nu)$ implies $\mu=\nu$. Moreover, $R((0,0,\dots,0))=\emptyset$ and for all $\mu$,
\[
R(\phi \mu)=R(\mu) \cup \{(w_\mu(1),\mu_1+1)\}
\]
and if $\mu_i < \mu_{i+1}$, then
\begin{equation} 
\label{eq:polelocation} 
R(s_i \mu)=R(\mu) \cup \{(w_\mu(i), w_\mu(i+1),\mu_{i+1}-\mu_i) \}.
\end{equation} 
Note that the recursion \eqref{eq:recursion}, $f_{s_i \mu}=\sigma_i f_\mu$, is well-defined here if and only if 
\[
(w_\mu(i), w_\mu(i+1),\mu_{i+1}-\mu_i)
\] 
is not a non-semisimple triple. Now assume $R(\mu) \subseteq R(\nu)$. One checks that at least one of the following holds: 
\begin{itemize}
\item[(i)] $\mu=\nu$,
\item[(ii)] $(w_\mu(1),\mu_1+1) \in R(\nu)$, or
\item[(iii)] there is some $1 \leq i \leq n-1$ with $\mu_i < \mu_{i+1}$ and $(w_\mu(i),w_\mu(i+1),\mu_{i+1}-\mu_i) \in R(\nu)$.
\end{itemize} Given this, it follows that either $\mu=\nu$, or that $R(\mu) \subsetneq R(\phi \mu) \subseteq R(\nu)$ or that $R(\mu) \subsetneq R(s_i \mu) \subseteq R(\nu)$ for some $1 \leq i \leq n-1$. By induction, one may therefore find a sequence $\mu=\mu^{(0)},\mu^{(1)},\dots,\mu^{(p)}=\nu$ with the following properties: $R(\mu^{(i)})$ is obtained from $R(\mu^{(i-1)})$ by adjoining a single element of $R(\nu)$, and either $\mu^{(i)}=\phi \mu^{(i-1)}$ or $\mu^{(i)}=s_j \mu^{(i-1)}$ for some $1 \leq j \leq n-1$ with $\mu^{(i-1)}_j< \mu^{(i-1)}_{j+1}$. Finally, assuming no element of $R(\nu) \setminus R(\mu)$ is non-semisimple, this last condition implies that the recursion \eqref{eq:recursion} is well-defined at each stage of this process, thanks to \eqref{eq:polelocation}. This completes the proof.
\end{proof} Even in the simplest examples the Jack polynomials in the theorem do not exhaust the set of all well-defined Jack polynomials belonging to $I_{s,\ell,m}$, but for our purposes they are enough.
 
We now show that Theorem~\ref{thm:non-symm vanishing} can be upgraded to a stronger vanishing property for symmetric Jack polynomials. Let $e := \sum_{w \in S_n} w$ be the symmetrizer; this is idempotent up to a factor of $n!$. Given a non-increasing sequence $\lambda=(\lambda_1,\lambda_2,\dots,\lambda_n) \in \ZZ_{\geq 0}^n$, write $\lambda^-$ for its non-decreasing rearrangement and $n_\lambda$ for the order of its stabilizer in the symmetric group $S_n$, which acts by permuting coordinates. Define the \emph{(symmetric) Jack polynomial} $p_\lambda$ to be 
\[
p_\lambda := \frac{1}{n_\lambda} e f_{\lambda^-}.
\] If we wish to emphasize the dependence on $\alpha$ we may write $p_\lambda^{(\alpha)}$, as in the introduction. This has as leading term a monomial symmetric function $m_\lambda$, and it is an eigenfunction for symmetric polynomials in the $z_i$'s; in fact, it is characterized by these properties.
 
Evidently, Theorem~\ref{thm:non-symm vanishing} and Lemma~\ref{lem:symm lemma} together imply a vanishing to the order $\ell+1$ for the symmetrizations of the polynomials appearing in Theorem~\ref{thm:non-symm vanishing}. 
This follows because, given a partition $\lambda$ with at most $n$ nonzero parts such that $\lambda^-$ satisfies the conditions of Theorem~\ref{thm:non-symm vanishing}, the equation $p_\lambda=\frac{1}{n_\lambda} e f_{\lambda^-}$ implies that $p_\lambda \in I_{s,\ell,m}^{S_n}$. The conditions on $\lambda$ in the following theorem guarantee that the anti-partition rearrangement $\lambda^-$ of $\lambda$ satisfies the hypotheses of Theorem \ref{thm:non-symm vanishing}.
\begin{theorem} 
\label{thm:symm vanishing} 
Let $\lambda \in \ZZ_{\geq 0}^n$ be a partition such that 
\begin{align*}
\lambda_i \geq \lambda_{i+m-1}+\ell+1 \quad 
&\hbox{for all $1 \leq i \leq n-(sm+m-1)+1$,}
\\
\lambda_i \geq \lambda_{i+m-1}+s \ell+1 \quad 
&\hbox{for all $n-(sm+m-1)+2 \leq i \leq n-(sm-1)$,}
\text{ and}
\\
\lambda_i \leq \lambda_{i+m}+\ell \quad 
&\hbox{for all $n-(sm-1)<i \leq n-m$.}
\end{align*}
Then the Jack polynomial $p_\lambda$ is well-defined and belongs to $I_{s,\ell,m}$, and so $(\ell+1)$-annihilates each irreducible component of $X_{k,m}$. 
In particular, upon specializing the variables as in Lemma~\ref{lem:order of vanishing}, each such $p_\lambda$ is divisible by $\prod_{i=sm}^n (x_i-z)^{\ell+1}.$
\end{theorem}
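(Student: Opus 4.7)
The plan is to reduce the theorem to Theorem~\ref{thm:non-symm vanishing} applied to the anti-partition rearrangement $\lambda^- = (\lambda_n,\lambda_{n-1},\dots,\lambda_1)$ of $\lambda$, and then pass from the non-symmetric Jack polynomial $f_{\lambda^-}$ to the symmetric Jack polynomial $p_\lambda = \frac{1}{n_\lambda}\, e\, f_{\lambda^-}$ using Lemma~\ref{lem:symm lemma} for the upgrade from $\ell$-annihilation to $(\ell+1)$-annihilation, and then Lemma~\ref{lem:order of vanishing} to read off the divisibility of the specialization.

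The first (and really only substantive) step is to verify that if $\lambda$ satisfies the three displayed hypotheses, then $\mu := \lambda^-$ satisfies conditions (a), (b), (c) of Theorem~\ref{thm:non-symm vanishing}. Writing $\mu_j = \lambda_{n+1-j}$, the substitution $j = n+1-i$ sends the three $i$-ranges in the theorem's hypotheses bijectively onto the three $j$-ranges of Theorem~\ref{thm:non-symm vanishing}: the range $n-(sm-1)<i\le n-m$ maps to $m+1\le j\le sm-1$ (condition~(a), after noting $j-m=n+1-(i+m)$); the range $n-(sm+m-1)+2\le i\le n-(sm-1)$ maps to $sm\le j\le sm+m-2$ (condition~(b)); and the range $1\le i\le n-(sm+m-1)+1$ maps to $j\ge sm+m-1$ (condition~(c)). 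The ``$+1$'' appearing on the right-hand sides of the first two hypotheses turns the inequalities needed in (b) and (c) into strict inequalities, so the corresponding tie-breaking conditions on $w_\mu^{-1}$ are automatically vacuous. In condition~(a), the equality case $\mu_j = \mu_{j-m}+\ell$ still has $\mu_j > \mu_{j-m}$ since $\ell>0$, so the positions $w_\mu^{-1}(j)$ and $w_\mu^{-1}(j-m)$ lie in distinct value-blocks of the non-decreasing sequence $\mu$; the tie-breaking convention (``farther right is ranked smaller among ties'') operates only within each such block, and therefore places $w_\mu^{-1}(j)$ strictly to the right of $w_\mu^{-1}(j-m)$, giving $w_\mu^{-1}(j) > w_\mu^{-1}(j-m)$ as required.

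Given this translation, Theorem~\ref{thm:non-symm vanishing} tells us that $f_{\lambda^-}$ is well-defined and lies in $I_{s,\ell,m}$. Hence $p_\lambda = \frac{1}{n_\lambda}\, e\, f_{\lambda^-}$ is well-defined, and since $I_{s,\ell,m}=\bigcap_{w\in S_n}w(J_{s,\ell,m})$ is $S_n$-stable by construction, $p_\lambda \in I_{s,\ell,m}^{S_n}$. Lemma~\ref{lem:symm lemma} then gives $p_\lambda \in I(Z_{s,m})^{\ell+1}$, i.e.\ $p_\lambda$ is $(\ell+1)$-annihilating on $Z_{s,m}$; the $S_n$-invariance of $p_\lambda$ propagates this to each translate $w(Z_{s,m})$, which are the irreducible components of $X_{s,m}$. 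A final application of Lemma~\ref{lem:order of vanishing} (with its $\ell$ replaced by $\ell+1$) delivers the claimed divisibility of the specialization by $\prod_{i=sm}^n (x_i-z)^{\ell+1}$.

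The only mildly delicate part is the bookkeeping in the first step: carefully matching index ranges and observing that the ``$+1$'' summands in hypotheses (I) and (II) are precisely what is needed to discharge the tie-breaking conditions of Theorem~\ref{thm:non-symm vanishing}, while in hypothesis (III) the gap $\ell>0$ between $\mu_j$ and $\mu_{j-m}$ ensures the tie-breaking condition of (a) holds automatically. Everything beyond that is a direct appeal to results already assembled in this section.
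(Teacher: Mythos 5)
Your proposal is correct and follows exactly the paper's route: verify that the hypotheses on $\lambda$ translate into conditions (a)--(c) of Theorem~\ref{thm:non-symm vanishing} for $\mu=\lambda^-$ (with the ``$+1$'' making the equality cases in (b),(c) vacuous and the non-decreasing arrangement forcing the tie-break in (a)), then apply the symmetrizer, Lemma~\ref{lem:symm lemma}, and Lemma~\ref{lem:order of vanishing}. The only difference is that you spell out the index bookkeeping the paper merely asserts, and your verification of it is accurate.
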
 

The minimal degree partition satisfying the conditions of Theorem~\ref{thm:symm vanishing} is 
\[
\lambda=((s(\ell+q_s)+q_s+1)^{r_s},(s(\ell+q_s-1)+q_s)^{m-1},\dots,(s(\ell+1)+2)^{m-1},(s \ell+1)^{m-1},0^{sm-1}).
\]
As an example, Theorem~\ref{thm:symm vanishing} asserts that the Jack polynomial $p_{(7,0,0,0)}(x,x,y,z)$ at Jack parameter $-2/3$ is well-defined and divisible by $(y-z)^4$, as are, for example, $p_{(8,1,0,0)}(x,x,y,z)$ and $p_{(14,7,5,5)}(x,x,y,z)$. 

We conclude this section by proving the conjecture of Bernevig and Haldane from~\cite[Section~III~B]{BeHa2}. For this, we let $d$ and $d'$ be positive integers with $d d' \leq q$. 

\begin{lemma} 
\label{lem:non symm version of IIIB}
The elements of the ideal $I_{d d',\ell,m}$ all $(d \ell)$-annihilate each irreducible component of the arrangement $X_{d',dm}$.
\end{lemma}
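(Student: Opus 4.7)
The plan is to apply Bezrukavnikov--Etingof style parabolic induction, via Lemma~\ref{lem:IOU}, to realize $I_{dd',\ell,m}$ inside an ideal $J$ built from $d'$ copies of $I_{d,\ell,m}$, and then to extract the stronger order-$d\ell$ vanishing on $Z_{d',dm}$ from a degree bound on $I_{d,\ell,m}$ coming from Proposition~\ref{prop:Jack generators}. First I set up the parabolic: take $W' = S_{dm}^{d'} \subseteq S_n$ to be the stabilizer of a generic point with $d'$ distinct-valued blocks of $dm$ coordinates (and the remaining $n-d'dm$ coordinates all distinct), so that $V^{W'}$ is the subspace on which each of the first $d'$ blocks of $dm$ coordinates is constant and $V' := (V^{W'})^\perp = V_{dm}^{(1)} \oplus \cdots \oplus V_{dm}^{(d')}$ with $V_{dm}^{(i)}$ the sum-zero subspace of the $i$-th block; correspondingly $H_{\ell/m}(W',V')$ factors as $\bigotimes_{i=1}^{d'} H_{\ell/m}(S_{dm}, V_{dm}^{(i)})$.

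Next I establish the degree bound $I_{d,\ell,m}^V \subseteq \mathfrak{m}^{d\ell}$, where $I_{d,\ell,m}^V := I_{d,\ell,m} \cap \CC[V_{dm}]$ and $\mathfrak{m}$ is the maximal ideal at the origin of $V_{dm}$. By Proposition~\ref{prop:Jack generators}, the ideal $I_{d,\ell,m}$ is generated by the singular non-symmetric Jack polynomial $f_{\mu_{d,\ell,m}}$, and a direct read of Notation~\ref{not:partition indexing} with $n=dm$, $s=d$ gives $|\mu_{d,\ell,m}| = d\ell$. Since $f_{\mu_{d,\ell,m}}$ is killed by every Dunkl operator, the submodule it generates involves only (degree-nondecreasing) multiplication by coordinate functions and $S_{dm}$-action, so $I_{d,\ell,m}$ sits in total degrees $\geq d\ell$. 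Because the generators of $J_{d,\ell,m}$ are polynomials in differences and the permutation action preserves this, $I_{d,\ell,m}$ is translation invariant, and an elementary Taylor-expansion argument shows that every translation-invariant ideal factors as $I_{d,\ell,m} = I_{d,\ell,m}^V \cdot \CC[x_1,\dots,x_{dm}]$; the degree bound then descends to $I_{d,\ell,m}^V \subseteq \mathfrak{m}^{d\ell}$, and $I_{d,\ell,m}^V$ is naturally an $H_{\ell/m}(S_{dm}, V_{dm})$-submodule of $\CC[V_{dm}]$.

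With these ingredients I form $I' := \sum_{i=1}^{d'} I_{d,\ell,m}^{V,(i)} \subseteq \CC[V']$, where $I_{d,\ell,m}^{V,(i)}$ denotes $I_{d,\ell,m}^V$ sitting in the $i$-th tensor factor of $\CC[V'] = \bigotimes_i \CC[V_{dm}^{(i)}]$. As a sum of pullbacks of submodules from each tensor factor, $I'$ is an $H_{\ell/m}(W',V')$-submodule, so by Lemma~\ref{lem:IOU} the intersection $I := \bigcap_{w \in S_n} w(J)$ is an $H_{\ell/m}(S_n, \CC^n)$-submodule of $\CC[x_1,\dots,x_n]$, where $J \subset \CC[x_1,\dots,x_n]$ is the ideal generated by $I'$. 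A check of vanishing loci gives $V(J) \subseteq X_{dd',m}$ (the component where the $dd'$ sub-clusters lie $d$ per block in the first $d'$ blocks of $dm$), and $\bigcup_{w \in S_n} w(V(J)) = X_{dd',m}$, so Theorem~\ref{thm:submodule classification} identifies $I = I_{dd',\ell,m}$; in particular, $I_{dd',\ell,m} \subseteq J$. On the other hand, writing $u_{i,j} := x_{(i-1)dm+j} - x_{(i-1)dm+j+1}$ for the differences generating $I(Z_{d',dm})$, each summand of $I'$ is a polynomial of total degree $\geq d\ell$ in the $u_{i,1},\dots,u_{i,dm-1}$ of a single block, hence lies in $I(Z_{d',dm})^{d\ell}$; multiplying by arbitrary elements of $\CC[x_1,\dots,x_n]$ preserves this containment, so $J \subseteq I(Z_{d',dm})^{d\ell}$. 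Combining, $I_{dd',\ell,m} \subseteq I(Z_{d',dm})^{d\ell}$, and the $S_n$-invariance of $I_{dd',\ell,m}$ (by its very definition as $\bigcap_w w(J_{dd',\ell,m})$) promotes this to the same containment for every $w(Z_{d',dm})$, giving $(d\ell)$-annihilation of every irreducible component of $X_{d',dm}$.

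The main technical hurdle is the degree bound in the middle paragraph: both the singular-vector property of $f_{\mu_{d,\ell,m}}$ (which prevents Dunkl operators from producing elements of $I_{d,\ell,m}$ in degrees below $d\ell$) and the translation invariance of $I_{d,\ell,m}$ (which lets the bound descend cleanly to $\CC[V_{dm}]$, where ``degree $\geq d\ell$'' is literally the same as ``lying in $\mathfrak{m}^{d\ell}$'') are essential. Once these are in hand, the identification $\bigcap_w w(J) = I_{dd',\ell,m}$ and the final passage from a bound on $J$ to a bound on every $S_n$-translate of $Z_{d',dm}$ are largely formal consequences of Lemma~\ref{lem:IOU} and Theorem~\ref{thm:submodule classification}.
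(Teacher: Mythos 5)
Your proof is correct and follows essentially the same route as the paper's: show that in $dm$ variables the ideal $I_{d,\ell,m}$ is generated in degree $d\ell$ by polynomials in the coordinate differences (the paper gets this from the generators being singular and killed by $y_1+\cdots+y_{dm}=\sum_i\partial_i$, you from translation invariance plus a Taylor-expansion factorization --- the same content), and then use Lemma~\ref{lem:IOU} together with Theorem~\ref{thm:submodule classification} to identify $\bigcap_{w\in S_n} w(J)$ with $I_{dd',\ell,m}$ and deduce the containment in $I(Z_{d',dm})^{d\ell}$, transported to all components by $S_n$-invariance. The added care with the parabolic setup and with $I_{d,\ell,m}^V$ merely fills in details the paper leaves implicit.
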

\begin{proof}
We first treat the special case that the number of variables is $n=dm$ and $d'=1$. By Notation~\ref{not:partition indexing}, the non-symmetric Jack polynomials generating $I_{d,\ell,m}$ have degree $d \ell$, and they are evidently annihilated by the operator $y_1+y_2+\cdots+y_n$, which a direct calculation shows is the sum of the partial derivatives with respect to the $x$ variables. These non-symmetric Jack polynomials are therefore polynomials in the pairwise differences $x_1-x_2,\dots,x_{n-1}-x_n$, establishing the special case.

To treat the general case, we temporarily append the number $n$ (the ambient dimension) to all subscripts, so that for instance $I_{d,\ell,m,n}$ refers to the corresponding ideal in $n$ variables. We have proved that elements of $I_{d,\ell,m,dm}$ all $d \ell$-annihilate $X_{1,dm,dm}$. For $n \geq kd'm$, let $J$ be the ideal generated by the images of $I_{d,\ell,m,dm}$ via the first $d'$ obvious inclusions $\CC[x_1,\dots,x_{km}] \hookrightarrow \CC[x_1,\dots,x_n]$. The elements of the intersection
\[
I:=\bigcap_{w \in S_n} w(J)
\] 
then $d \ell$-annihilate the irreducible components of $X_{d',dm,n}$. On the other hand, the zero set of $I$ is exactly $X_{dd',m,n}$, and by Lemma~\ref{lem:IOU}, $I$ is $H_{\ell/m}(S_n,\CC^n)$-stable. But these properties characterize $I_{dd',\ell,m,n}$ by Theorem~\ref{thm:submodule classification}, implying that $I=I_{dd',\ell,m,n}$.
\end{proof}

\begin{theorem} \label{thm:mainthm1}
The elements of $I_{dd',\ell,m}^{S_n}$ all $(d \ell + 1)$-annihilate each irreducible component of the arrangement $X_{d',dm}$.
\end{theorem}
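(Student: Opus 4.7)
My plan is to combine the idea behind Lemma~\ref{lem:symm lemma} --- that $S_n$-invariance improves the vanishing order by one --- with Lemma~\ref{lem:non symm version of IIIB} and the description of generators from Proposition~\ref{prop:Jack generators}. By $S_n$-equivariance, it suffices to prove that every $f \in I_{dd',\ell,m}^{S_n}$ lies in $I(Z)^{d\ell+1}$, where $Z := Z_{d',dm}$ is the standard component of $X_{d',dm}$ consisting of $d'$ consecutive clusters of $dm$ equal coordinates.

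First I would set up generators in the spirit of the proof of Lemma~\ref{lem:non symm version of IIIB}: we have $I_{dd',\ell,m} \subseteq J$, where $J$ is the ideal of $\CC[x_1,\dots,x_n]$ generated by the images of $I_{d,\ell,m,dm} \subseteq \CC[x_1,\dots,x_{dm}]$ under the $d'$ obvious block embeddings. By Proposition~\ref{prop:Jack generators} together with Notation~\ref{not:partition indexing} (which forces $\tau_{d,m} = (dm-1,1)$ when $n = dm$), the ideal $I_{d,\ell,m,dm}$ is generated by the $\CC S_{dm}$-span of $f_{\mu_{d,\ell,m}}$, which is a copy of the Specht module $S^{(dm-1,1)}$ --- the reflection representation of $S_{dm}$ --- consisting of polynomials of degree $d\ell$. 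By Lemma~\ref{lem:non symm version of IIIB} with $d'=1$, each of these generators $d\ell$-annihilates the full diagonal of its $dm$-variable block. Letting $f_1^{(p)},\dots,f_{dm-1}^{(p)}$ denote the images of a chosen basis under the $p$th block embedding, we have $f_i^{(p)} \in I(Z^{(p)})^{d\ell}$, where $Z^{(p)} \supseteq Z$ is the $p$th block diagonal.

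Now I would run the symmetrization argument of Lemma~\ref{lem:symm lemma} essentially verbatim, with $(m,s)$ replaced by $(dm,d')$. Writing $f = \sum_{i,p} g_i^{(p)} f_i^{(p)}$, invariance of $f$ under $S_{dm}^{(1)} \times \cdots \times S_{dm}^{(d')}$ together with the self-duality and irreducibility of $S^{(dm-1,1)}$ allow us to assume that for each fixed $p$ the polynomials $g_i^{(p)}$ lie in the $S^{(dm-1,1)}$-isotypic part for the $S_{dm}^{(p)}$-action on the $p$th block variables (applied to each monomial-coefficient in the remaining variables). Since $\CC[x_{(p-1)dm+1},\dots,x_{pdm}]/I(Z^{(p)}) \cong \CC[t]$ carries trivial $S_{dm}^{(p)}$-action, any non-invariant isotypic is killed modulo $I(Z^{(p)})$, so $g_i^{(p)} \in I(Z^{(p)})$. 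Each summand therefore lies in $I(Z^{(p)})^{d\ell+1} \subseteq I(Z)^{d\ell+1}$, using $I(Z) = \sum_p I(Z^{(p)})$ for the intersection of coordinate subspaces $Z = \bigcap_p Z^{(p)}$, and summing yields $f \in I(Z)^{d\ell+1}$. The main delicate point is the isotypic-projection bookkeeping when $g_i^{(p)}$ depends on variables outside the $p$th block; this is identical to the subtlety already present in Lemma~\ref{lem:symm lemma} and rests on $\dim\Hom_{S_{dm}}\!\bigl(S^{(dm-1,1)},S^{(dm-1,1)}\bigr)=1$.
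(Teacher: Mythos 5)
Your proof is correct and takes essentially the same route as the paper, which likewise deduces the theorem by running the symmetrization argument of Lemma~\ref{lem:symm lemma} on top of Lemma~\ref{lem:non symm version of IIIB}. If anything, your write-up is more explicit than the paper's: you apply the averaging/isotypic argument to the degree-$d\ell$ generators of $J$ spanning copies of $S^{(dm-1,1)}$ in each $dm$-variable block (coming from Proposition~\ref{prop:Jack generators} with $\tau_{d,m}=(dm-1,1)$), which is exactly the choice needed to upgrade the order of vanishing from $d\ell$ to $d\ell+1$.
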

\begin{proof}
This follows from Lemma~\ref{lem:non symm version of IIIB} via the argument in Lemma~\ref{lem:symm lemma}. In detail, with notation as in Lemma~\ref{lem:symm lemma}, a symmetric element $f \in I_{dd', \ell,m}^{S_n}$ may be written as
\[
f=\sum_{i,p} g_i^{(p)} f_i^{(p)}
\] 
with $g_i^{(p)}$, for $1 \leq i \leq m-1$, spanning a copy of $V_m$ and therefore contained in the ideal generated by the differences $x_{(p-1)m+j}-x_{(p-1)m+j+1}$. This provides the additional vanishing required beyond that supplied by Lemma~\ref{lem:non symm version of IIIB}.
\end{proof} Theorem \ref{thm:intro} now follows from this result together with Theorem \ref{thm:symm vanishing} and Lemma \ref{lem:order of vanishing}, keeping in mind that $k+1=m$ and $r-1=\ell$.

\section{Conjectural BGG-style resolution}
\label{sec:BGG conj}
 
In this section, we present a conjecture about the structure of $I_{1,1,m}$ as a Cherednik algebra module. Cherednik conjectured, and Etingof and Stoica proved, that this ideal is a unitary module~\cite{EtSt}. We conjecture that for the Cherednik algebra of $S_n$, all unitary modules have resolutions by sums of standard modules. This would provide an analogue of the Bernstein--Gelfand--Gelfand resolution for a finite dimensional Lie algebra module, as well as an analogue of a theorem of Cherednik~\cite{Che} for the single affine Hecke algebra. 

It is not difficult to find examples of finite dimensional modules for the Cherednik algebra for groups $W$ other than $S_n$ that do not admit this type of resolution; the smallest example occurs when $W=G(2,1,2)$. On the other hand, one might wonder if unitary modules always have such resolutions; here we present enough evidence to make this conjecture in type A. Using the classification of unitary modules for the rational Cherednik algebras of the groups $G(r,1,n)$ given in~\cite{Gri3}, one might begin to gather enough data to extend this conjecture (or find a counterexample) to other cases.
 
Evidently, the results of Etingof--Stoica and Dunkl together imply that the non-symmetric Jack polynomials in the $\CC S_n$-span of $f_{\mu_{s,1,m}}$ give a minimal set of generators for the ideal $I_{s,1,m}$ of the set of points in $\CC^n$, where there are $s$ clusters of $m$ equal coordinates, generalizing the work of Li and Li~\cite{LiLi} to this case. However, we will see that the case $s=1$ is special in another way: since it corresponds to a unitary Cherednik algebra module, its minimal resolution is (conjecturally) of representation theoretic origins. 
 
The \emph{PBW Theorem} for $H_c = H_c(W,V)$ states that multiplication induces a vector space isomorphism (see~\cite[Corollary 2.2]{Gri2} for an elementary proof), 
\[
\CC[V] \otimes_\CC \CC W \otimes_\CC \CC[V^*] \cong H_c.
\]
The \emph{standard module} $\Delta_c(\lambda)$ corresponding to $S^\lambda$ is defined as
\[
\Delta_c(\lambda):=\mathrm{Ind}_{\CC[V^*] \rtimes W}^{H_c} S^\lambda,
\]  
where $S^\lambda$ is a $\CC[V^*]$-module such that elements of $V$ act by zero.
By the PBW theorem, there is a vector space isomorphism $\Delta_c(\lambda) \cong \CC[V] \otimes_\CC S^\lambda$ identifying $\Delta_c(\lambda)$ with $S^\lambda$-valued polynomial functions on $V$. The action of $y \in V$ is  given by an analogue of the Dunkl operators:
\[
y(f \otimes u)=\partial_y(f) \otimes u-\sum_{s \in T} c_s \la \alpha_s, y \ra \frac{f-s(f)}{\alpha_s} \otimes s(u) \quad \hbox{for $f \in \CC[V]$ and $u \in S^\lambda$.}
\]
 
We fix a positive definite $W$-invariant Hermitian form $(\cdot,\cdot)$ on each irreducible $\CC W$-module $S^\lambda$. In particular, we obtain a conjugate linear $W$-equivariant isomorphism $V^* \rightarrow V$, which induces a $W$-equivariant conjugate linear ring isomorphism $\CC[V] \rightarrow \CC[V^*]$, written $f \mapsto \overline{f}$. We use this to define a Hermitian pairing $(\cdot,\cdot)_c$ on $\Delta_c(\lambda)$, given by the formula
\[
(f_1 \otimes u_1,f_2 \otimes u_2)_c := (u_1,\overline{f_1} (f_2 \otimes u_2)(0) ) \quad \hbox{for $f_1,f_2 \in \CC[V]$ and $u_1,u_2 \in S^\lambda$,}
\] 
where on the right hand side the formula $\overline{f_1} (f_2 \otimes u_2)(0)$ is computed by first allowing the composition of Dunkl operators $\overline{f_1}$ to act on $f_2 \otimes u_2$, and then evaluating the resulting function at $0$ to obtain an element of $S^\lambda$.

Define $L_c(\lambda)$ to be the quotient of $\Delta_c(\lambda)$ by the radical of $(\cdot,\cdot)_c$, 
\[
L_c(\lambda) := 
\Delta_c(\lambda) / \mathrm{Rad}(\cdot,\cdot)_c.
\] 
It is not difficult to see that $L_c(\lambda)$ is the unique simple quotient of $\Delta_c(\lambda)$, or equivalently that $\mathrm{Rad}(\cdot,\cdot)_c=\mathrm{Rad}(\Delta_c(\lambda))$. By definition, the form $(\cdot,\cdot)_c$ descends to a nondegenerate pairing on $L_c(\lambda)$. We say that $L_c(\lambda)$ is \emph{unitary} if this form is positive definite.

From now on, we assume that $W=S_n$ is acting on $V=\CC^n$. 

\begin{definition}
\label{def:abacus}
Given a partition $\lambda$ and a positive integer $m$, we define its $m$-\emph{abacus diagram} as follows: the abacus has $m$ vertical runners labeled $0,1,2,\dots,m-1$, and beads are placed on the runners in a way we will describe. First, tracing the border of the partition $\lambda$ from northeast to southwest produces a sequence of ``down steps" and ``left steps." We now read across the positions in the abacus from left to right and top to bottom, leaving an empty space for each down step, and a bead for each left step. Thus we always begin with an empty space, and the number of beads in the abacus diagram is $\lambda_1$. 
\end{definition}

\begin{example}
\label{ex:abacus}
If $m=5$ and 
\[
\lambda=(4,4,3)=\begin{array}{c} \tiny \young(\hfil\hfil\hfil\hfil,\hfil\hfil\hfil\hfil,\hfil\hfil\hfil) \end{array}, \ \hbox{then the $5$-abacus for $\lambda$ is} \quad
\begin{smallmatrix}
0&1&2&3&4\\[2pt]\hline
\nb&\nb&\bd&\nb&\bd\\
\bd&\bd&\nb&\nb&\nb\\
\nb&\nb&\nb&\nb&\nb\\
\nb&\nb&\nb&\nb&\nb\\
\nb&\nb&\nb&\nb&\nb\\
\vd&\vd&\vd&\vd&\vd
\end{smallmatrix}. 
\] 
\end{example}

Let $m$ be a positive integer. The module $L_{1/m}(\lambda)$ is unitary precisely when, in the abacus diagram for $\lambda$, the first bead and the last bead are separated by at most $m$ positions (and in particular, there is at most one bead on each runner). That is, if the first bead occurs on runner $i>0$ in some row, the last bead must occur on or before runner $i-1$ in the next row (or if $i=0$, on or before the last runner on the same row). This is an immediate consequence of the classification of unitary modules given by the second author and Etingof--Stoica in~\cite{EtSt}. In particular, the unitary module $I_{1,1,m}$ is $L_{1/m}(\lambda)$, where $\lambda=(m-1,m-1,\dots,m-1,r)$ contains a part $m-1$ with multiplicity $q$ and a part $r$ with multiplicity $1$. Here, we have divided $n$ by $m-1$ to obtain a quotient $q$ and remainder $r$. The $\lambda$ in Example~\ref{ex:abacus} corresponds in this way to the $5$-equals set in $\CC^{11}$.

\begin{definition}
\label{def:homological degree}
Let $D$ be an $m$-abacus diagram with at most one bead on each runner. The \emph{homological degree} $\mathrm{hd}(D)$ is defined to be the sum over all pairs of beads $b_1$ on runner $i$ and $b_2$ on runner $j$ in the diagram $D$, with $i<j$, of $k-1$ if the bead $b_1$ is $k>1$ rows lower than the bead $b_2$, and of $k$ if the bead $b_1$ is $k>0$ rows higher than the bead $b_2$. In short, 
\[
\mathrm{hd}(D) := \sum_{\substack{i<j\\ b_1 
\text{ lies below } 
b_2}}(\text{row}(b_1) - \text{row}(b_2)-1) 
\ \ + 
\sum_{\substack{i<j\\ b_1 
\text{ lies above } 
b_2 }} (\text{row}(b_2)-\text{row}(b_1)).
\]
\end{definition}

It is a straightforward check that the homological degree of the $m$-abacus diagram of $\lambda$ is $0$ when $L_{1/m}(\lambda)$ is unitary. This includes the diagram in Example~\ref{ex:abacus}.

\begin{notation}
Let $P_m(\lambda)$ be the set of all partitions obtained as follows: starting with the $m$-abacus diagram for $\lambda$, we are allowed to move any bead up one row provided that we compensate by moving another bead down one row. Composing moves of this type, we obtain all the $m$-abacus diagrams of the elements of $P_m(\lambda)$. 
\end{notation}

We now describe the conjectural BGG-style resolution of a unitary module $L_{1/m}(\lambda)$. 

\begin{conjecture} 
\label{conj:BGG conjecture}
The unitary module $L_{1/m}(\lambda)$ has a resolution 
\[
\cdots \longrightarrow M_i \longrightarrow \cdots \longrightarrow M_0 \longrightarrow L_{1/m}(\lambda) \longrightarrow 0
\] 
where $M_i$ is the sum of all standard modules 
\[
\Delta_{1/m}(\mu)=\CC[x_1,x_2,\dots,x_n] \otimes S^\mu
\] 
over partitions $\mu \in P_m(\lambda)$ whose corresponding $m$-abacus diagram has homological degree $i$. 
\end{conjecture}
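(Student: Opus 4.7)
The plan is to construct the conjectured resolution explicitly and then verify exactness by combining an Euler-characteristic identity with a self-duality argument coming from unitarity. I envision four stages.

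First, I would produce, for each pair $\mu,\nu \in P_m(\lambda)$ whose $m$-abacus diagrams differ by an elementary move (one bead moved up by one row, another bead moved down by one row in compensation), a canonical Cherednik algebra homomorphism $\Delta_{1/m}(\mu) \to \Delta_{1/m}(\nu)$ generated by a singular vector. By the PBW description of standard modules, such a singular vector is the image of an $S^\nu$-isotypic component of $\mathbb{C}[V]\otimes S^\mu$ that is annihilated by all Dunkl operators at $c=1/m$. The non-symmetric Jack polynomial machinery of Section~\ref{sec:symm}, in particular the lowest-weight generators $f_{\mu_{s,\ell,m}}$ produced by Proposition~\ref{prop:Jack generators} and the Knop--Sahi recursion, is the natural source for explicit formulas.

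Second, I would assemble these elementary maps, with appropriate signs, into a differential
\[
d \colon M_i \longrightarrow M_{i-1},
\]
and verify $d^2=0$ by checking that each two-dimensional face of the bead-move poset on $P_m(\lambda)$ gives a commuting or anticommuting square; the sign rule on faces is to be chosen, as in the classical BGG resolution, so that all such squares anticommute. Compatibility on a square reduces to an identity among compositions of two singular vectors in $\Delta_{1/m}$-modules, which in turn reduces to an identity among the corresponding non-symmetric Jack polynomials at $c=1/m$.

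Third, I would establish the Euler-characteristic identity
\[
\sum_{\mu \in P_m(\lambda)} (-1)^{\mathrm{hd}(\mu)} \, \mathrm{ch}\, \Delta_{1/m}(\mu) \;=\; \mathrm{ch}\, L_{1/m}(\lambda)
\]
in the Grothendieck group of category $\mathcal{O}$. By BGG reciprocity for rational Cherednik algebras, this identity encodes the Kazhdan--Lusztig-type composition multiplicities $[\Delta_{1/m}(\mu):L_{1/m}(\nu)]$ at $c=1/m$; these are known to be governed by characters in affine parabolic category $\mathcal{O}$ via Rouquier--Shan--Varagnolo--Vasserot and Leclerc (whose input is acknowledged in the introduction), and the task is to match them combinatorially with the inclusion--exclusion pattern on the abacus poset $P_m(\lambda)$ weighted by $\mathrm{hd}$. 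Finally, I would upgrade this Euler-characteristic identity to genuine exactness by exploiting the unitarity of $L_{1/m}(\lambda)$: the contravariant form interchanges $\Delta_{1/m}(\mu)$ with its BGG dual and induces a self-duality on the complex pairing $M_i$ with $M_{\mathrm{top}-i}$, so that vanishing of the top cohomology (forced by $M_0=\Delta_{1/m}(\lambda)$ surjecting onto its unique simple quotient $L_{1/m}(\lambda)$) propagates to vanishing everywhere.

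The main obstacle is step three. Constructing the singular-vector maps and verifying $d^2=0$ are largely formal once the right generators are pinned down, and the duality argument is standard in the presence of a positive-definite contravariant form. But matching the affine parabolic Kazhdan--Lusztig data with the purely elementary abacus picture, uniformly in $\lambda$, is where all the substantive work lies. The case $2m > n+1$ of Theorem~\ref{thm:intro2} is tractable precisely because $P_m(\lambda)$ degenerates there to two linear strands and the relevant Kazhdan--Lusztig polynomials are monomial; the general case loses both of these simplifications, and handling the full range of $m$ will require a careful analysis of bead motions on the affine $\widehat{\mathfrak{sl}}_m$-crystal underlying the module category.
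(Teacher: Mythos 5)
There is a genuine gap---indeed the statement you are trying to prove is stated in the paper as an open conjecture. The paper itself proves it only for $\lambda=(m-1,n-m+1)$ with $2m\ge n+1$ (Theorem~\ref{thm:pureres}), and it does so by a direct construction very unlike your plan: the two strands are built term by term from the Kronecker/skew-Schur identities \eqref{eqn:kronecker1}--\eqref{eqn:kronecker3}, which force uniqueness of the $S_n$-equivariant maps and $d^2=0$; equivariance of the first strand comes from singular vectors, exactness there from the fact that the relevant standard modules have a unique proper $H$-submodule (the ``diameter $n-k$'' argument from~\cite{Gri}), and exactness of the second strand from the dominance-order/$m$-core constraint of \cite[Corollary 5.2]{Gri}, multiplicity one of $L_c(\lambda)$ in $\Delta_c(\lambda)$, and the regularity bound $\reg A/I = k$ of Lemma~\ref{lem:reg}, which in turn rests on Cohen--Macaulayness from \cite{EGL}. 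None of these special features (two strands, unique submodules, known regularity) survives for general $m$, which is precisely why the general statement remains a conjecture.

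Beyond the admitted incompleteness of your step three, two of the steps you call ``largely formal'' are not, and one would fail outright. First, the existence of a singular vector realizing every elementary bead move, together with a consistent sign rule making all squares anticommute, is exactly what the authors report they cannot produce uniformly: they can write the maps in many cases using~\cite{Gri} but ``do not have a conjecture for such a formula in general.'' Second, your step four does not work as stated: an Euler-characteristic identity in the Grothendieck group (which is essentially Ruff's character formula~\cite{Ruf}, so plausibly available) plus vanishing at one end does not force exactness, and the duality you invoke is not there. The contravariant form interchanges standard with \emph{costandard} modules, not standard with standard, so it does not pair $M_i$ with $M_{\mathrm{top}-i}$; and the paper explicitly notes that even in the proven range the resolution is self-dual only when $k=2$. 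So you cannot propagate acyclicity from the surjection $M_0\to L_{1/m}(\lambda)$ by duality; some substitute for the paper's submodule-structure and regularity arguments, valid for all $m$, is still required.
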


\begin{example}
\label{ex:BGG}
When $\lambda=(4,4,3)$ and $m=5$, we list the $m$-abacus diagrams, organized by homological degree from $0$ to $6$, together with the Young diagrams of the corresponding partitions and a statistic $c_\mu$ that controls the degrees of the maps in the resolution, which will be explained in~\eqref{eq:cmu}.

\enlargethispage{\baselineskip}
\begin{longtable}{lc}
hd 0: 
& 
$	\begin{array}{ccc}
	\begin{smallmatrix}
	0&1&2&3&4\\[2pt]\hline
	\nb&\nb&\bd&\nb&\bd\\
	\bd&\bd&\nb&\nb&\nb\\
	\nb&\nb&\nb&\nb&\nb\\
	\nb&\nb&\nb&\nb&\nb\\
	\nb&\nb&\nb&\nb&\nb\\
	\vd&\vd&\vd&\vd&\vd
	\end{smallmatrix} 
	&
\begin{array}{c} \tiny \yng(4,4,3) \end{array}
	&
	c_\mu=10
	\end{array}$
\vspace{2mm}
\\
hd 1:
&
$	\begin{array}{ccc}
	\begin{smallmatrix}
	0&1&2&3&4\\[2pt]\hline
	\nb&\bd&\nb&\nb&\bd\\
	\bd&\nb&\bd&\nb&\nb\\
	\nb&\nb&\nb&\nb&\nb\\
	\nb&\nb&\nb&\nb&\nb\\
	\nb&\nb&\nb&\nb&\nb\\
	\vd&\vd&\vd&\vd&\vd
	\end{smallmatrix}
	&
	\begin{array}{c} \tiny \yng(4,3,3,1) \end{array} 
	&
	c_\mu=11
	\end{array}$
\\
hd 2:
&
\hspace{9.5mm}
$	\begin{array}{cccccc}
	\begin{smallmatrix}
	0&1&2&3&4\\[2pt]\hline
	\bd&\nb&\nb&\nb&\bd\\
	\nb&\bd&\bd&\nb&\nb\\
	\nb&\nb&\nb&\nb&\nb\\
	\nb&\nb&\nb&\nb&\nb\\
	\nb&\nb&\nb&\nb&\nb\\
	\vd&\vd&\vd&\vd&\vd
	\end{smallmatrix}
	&
	\begin{array}{c} \tiny \yng(3,3,3,2) \end{array}  
	& 
	c_\mu=12 \qquad
	&
	\begin{smallmatrix}
	0&1&2&3&4\\[2pt]\hline
	\nb&\bd&\bd&\nb&\nb\\
	\bd&\nb&\nb&\nb&\bd\\
	\nb&\nb&\nb&\nb&\nb\\
	\nb&\nb&\nb&\nb&\nb\\
	\nb&\nb&\nb&\nb&\nb\\
	\vd&\vd&\vd&\vd&\vd
	\end{smallmatrix}
	&
	\begin{array}{c} \tiny \yng(4,2,2,1,1,1) \end{array} 
	& 
	c_\mu=13
	\end{array}$
\\
hd 3: 
&
\hspace{9.5mm}
$	\begin{array}{cccccc}
	\begin{smallmatrix}
	0&1&2&3&4\\[2pt]\hline
	\nb&\bd&\bd&\nb&\bd\\
	\nb&\nb&\nb&\nb&\nb\\
	\bd&\nb&\nb&\nb&\nb\\
	\nb&\nb&\nb&\nb&\nb\\
	\nb&\nb&\nb&\nb&\nb\\
	\vd&\vd&\vd&\vd&\vd
	\end{smallmatrix} 
	&
	\begin{array}{c} \tiny \yng(4,2,1,1,1,1,1) \end{array}
	&
\hspace{-5mm}
	c_\mu=14 \qquad
\hspace{5mm}
	&
\hspace{-4.75mm}
	\begin{smallmatrix}
	0&1&2&3&4\\[2pt]\hline
	\bd&\nb&\bd&\nb&\nb\\
	\nb&\bd&\nb&\nb&\bd\\
	\nb&\nb&\nb&\nb&\nb\\
	\nb&\nb&\nb&\nb&\nb\\
	\nb&\nb&\nb&\nb&\nb\\
	\vd&\vd&\vd&\vd&\vd
	\end{smallmatrix}
	&
	\begin{array}{c} \tiny \yng(3,2,2,2,1,1) \end{array}
	&
\hspace{5mm}
	c_\mu=14
\hspace{-5mm}
\hspace{4.75mm}
	\end{array}$
\\
hd 4: 
	&
$	\begin{array}{cccccc}
	\begin{smallmatrix}
	0&1&2&3&4\\[2pt]\hline
	\bd&\nb&\bd&\nb&\bd\\
	\nb&\nb&\nb&\nb&\nb\\
	\nb&\bd&\nb&\nb&\nb\\
	\nb&\nb&\nb&\nb&\nb\\
	\nb&\nb&\nb&\nb&\nb\\
	\vd&\vd&\vd&\vd&\vd
	\end{smallmatrix}
	&
	\begin{array}{c} \tiny \yng(3,2,1,1,1,1,1,1) \end{array}
	&
	c_\mu=16 \qquad
	&
	\begin{smallmatrix}
	0&1&2&3&4\\[2pt]\hline
	\bd&\bd&\nb&\nb&\nb\\
	\nb&\nb&\bd&\nb&\bd\\
	\nb&\nb&\nb&\nb&\nb\\
	\nb&\nb&\nb&\nb&\nb\\
	\nb&\nb&\nb&\nb&\nb\\
	\vd&\vd&\vd&\vd&\vd
	\end{smallmatrix}
	&
	\begin{array}{c} \tiny \yng(2,2,2,2,2,1) \end{array}
	&
\hspace{9.5mm}
	c_\mu=15
\hspace{-9.5mm}
	\end{array}$
\\
hd 5:
	&
\hspace{-8.5mm}
$	\begin{array}{ccc}
	\begin{smallmatrix}
	0&1&2&3&4\\[2pt]\hline
	\bd&\bd&\nb&\nb&\bd\\
	\nb&\nb&\nb&\nb&\nb\\
	\nb&\nb&\bd&\nb&\nb\\
	\nb&\nb&\nb&\nb&\nb\\
	\nb&\nb&\nb&\nb&\nb\\
	\vd&\vd&\vd&\vd&\vd
	\end{smallmatrix}
	&
	\begin{array}{c} \tiny \yng(2,2,1,1,1,1,1,1,1) \end{array}
	&
	c_\mu=18
	\end{array}$
\\
hd 6: 
	&
\hspace{-12mm}
$	\begin{array}{ccc}
	\begin{smallmatrix}
	0&1&2&3&4\\[2pt]\hline
	\bd&\bd&\bd&\nb&\nb\\
	\nb&\nb&\nb&\nb&\nb\\
	\nb&\nb&\nb&\nb&\bd\\
	\nb&\nb&\nb&\nb&\nb\\
	\nb&\nb&\nb&\nb&\nb\\
	\vd&\vd&\vd&\vd&\vd
	\end{smallmatrix}
	&
	\begin{array}{c} \tiny \yng(1,1,1,1,1,1,1,1,1,1,1) \end{array}
	&
	c_\mu=22
	\end{array}$
\end{longtable}
\end{example}

The set $P_m(\lambda)$ can also be described as the set of all partitions $\mu$ of $n$ with the same $m$-core as $\lambda$ \cite[I.1, Example 8]{macdonald} and $\mu \le \lambda$ in dominance order. Our conjectural BGG-style resolution is a strengthening of a result of Ruff calculating the characters of \emph{completely splittable} irreducible finite Hecke algebra modules~\cite{Ruf}. Using the Knizhnik--Zamolodchikov functor as in \cite{GGOR} shows that Ruff's result follows from our conjecture. 

The degrees of the maps in the conjectural complex above can also be computed using the Cherednik algebra. Maps between objects in category $\OO_c$ are automatically of degree $0$, where the grading is given by the action of the deformed Euler element 
\[
h:=\sum_{1 \le i \le n} x_i y_i + c \sum_{1 \leq i< j \leq n}( 1-s_{ij}).
\] 
The action of $h$ on the degree $d$ part of the standard module $\Delta_c(\mu)=\CC[x_1,\dots,x_n] \otimes S^\mu$ is multiplication by the scalar $d+c(n(\lambda)-n(\lambda^t)+n(n-1)/2)$, where $n(\mu) := \sum_i (i-1)\mu_i$. Set
\begin{align}
\label{eq:cmu}
c_\mu:=c\left(n(\lambda)-n(\lambda^t)+n(n-1)/2\right).
\end{align}
These numbers have been listed above in Example~\ref{ex:BGG} with $c=1/5$. In this example, if we write the $i$th module $M_i$ in the resolution of Conjecture~\ref{conj:BGG conjecture} as a sum of modules of the form 
\[
\left(\CC[x_1,x_2,\dots,x_n] \otimes S^\mu \right)(c_\lambda-c_\mu),
\] 
where the parentheses denote the usual grading shift, then all the maps in the resulting complex are degree $0$. Also, none of the entries of the maps contain nonzero constant terms because the value of the $c_\mu$-function is increasing in the homological degree of the $m$-abacus of $\mu\in P_m(\lambda)$. Thus, Conjecture~\ref{conj:BGG conjecture} implies that our BGG-style resolution is a minimal free resolution of the $m$-equals ideal. Conjecture~\ref{conj:BGG conjecture} also predicts a combinatorial formula for the $S_n$-equivariant graded Betti numbers of the $m$-equals ideal.

In Example~\ref{ex:BGG}, the resolution of the ideal $I$ for the $5$-equals set in $\CC^{11}$ is as follows (we write $\lambda(-d)$ in place of $S^\lambda \otimes_\CC \CC[x_1, \dots, x_n](-d)$):
\begin{align*}
0 &\to (1^{11})(-12) \to (2^2,1^{7}) (-8) 
\to (3,2,1^6) (-6) \oplus (2^5,1) (-5) \\
&\to (4,2,1^5) (-4) \oplus (3,2^3,1^2) (-4) 
\to (3^3,2) (-2) \oplus (4,2^2,1^3) (-3) \\
&\to (4,3^2,1) (-1) \to (4^2,3) \rightarrow I \rightarrow 0
\end{align*}
 
Aside from its compatibility with Ruff's character formula and the special cases handled in the next section, evidence for Conjecture~\ref{conj:BGG conjecture} comes from direct calculation in a number of examples, using the Specht-module-valued version of Jack polynomials defined and studied in~\cite{Gri}. In fact, using the machinery of~\cite{Gri}, one can write down explicit formulas for the maps in our conjectural resolution in many cases; however, as of this writing, we do not have a conjecture for such a formula in general. 

There are a number of other subspace arrangements whose ideals are unitary modules for the rational Cherednik algebra. For instance, see the papers of Feigin~\cite{Fei} and Feigin--Shramov~\cite{FeSh}. One might hope that BGG-style resolutions of these modules also exist, so that one could obtain a  wider class of examples of linear subspace arrangements whose minimal free resolutions are explicitly known. The paper~\cite{Sid} treats some of these arrangements from the point of view of combinatorial commutative algebra.

\section{A proof of the conjecture for 
the $m$-equals set in $\CC^n$ with $2m \ge n+1$}
\label{sec:proof}
  
In this section, we prove Conjecture~\ref{conj:BGG conjecture} under the assumption that $2m \ge n + 1$. Set $A := \Sym(V^*) = \CC[x_1, \dots, x_n]$, $I := I_{1,1,m}$, $X := X_{1,m}$, and $k := n-m+1$. (This conflicts with the notation from the introduction, but should not cause confusion.) Recall that $I$ is the radical ideal of $X$, so $A/I$ has dimension $k$ and degree $\binom{n}{m}$. Also, $A/I$ is Cohen--Macaulay by~\cite[Proposition 3.9]{EGL}. We calculate a resolution of $A/I$ as an $H_c(S_n,\CC^n)$-module by standard modules, which yields a minimal free resolution of $A/I$ considered as an $A$-module. 

A graded free resolution $\bF_\bullet$ (over $A$) is {\it pure} if each $\bF_i$ is generated in a single degree. In this case, the list of these degrees is the {\it degree sequence}. There has been recent interest in modules with pure resolutions in connections with \emph{Boij--S\"oderberg theory}~(\cite{boijsoderberg, es:bs}, see also~\cite{es:survey, floystad} for surveys). We show that $A/I$ has a pure resolution. We will mainly focus on the case $2m > n+1$. The boundary case $2m=n+1$ is easier and is discussed in Remark~\ref{rmk:boundarycase}.

\begin{theorem} 
\label{thm:pureres}
Conjecture~\ref{conj:BGG conjecture} is valid for $\lambda = (m-1,n-m+1)$ and $c = 1/m$ with $m-1 \ge \frac{n}{2}$. In particular, the minimal free resolution of $A/I$ is pure with degree sequence $(0, k, k+1, \dots, n-k, n-k+2, \dots, n-1, n)$. In other words, the resolution of $I$ consists of $2$ linear strands separated by a quadratic jump. 
\end{theorem}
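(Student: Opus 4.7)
My plan is to first establish Conjecture~\ref{conj:BGG conjecture} for the unitary module $L_{1/m}(\lambda)$ with $\lambda = (m-1, k)$, where $k := n-m+1 \le m-1$ under the hypothesis $2m > n+1$, and then splice the resulting Cherednik resolution of $I = L_{1/m}(\lambda)$ onto $0 \to I \to A \to A/I \to 0$ (using $A \cong \Delta_{1/m}(\mathrm{triv})$) to read off the claimed pure $A$-resolution of $A/I$.

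\emph{Step 1: combinatorics of $P_m(\lambda)$.} Tracing the border of $\lambda$ yields the sequence $D\,L^{m-1-k}\,D\,L^k$, placing the $N = \lambda_1 = m-1$ beads of the $m$-abacus at positions $\{1, 2, \dots, m\} \setminus \{m-k\}$: one bead on each runner $a \in R := \{0, 1, \dots, m-1\} \setminus \{m-k\}$, with the runner-$0$ bead at row $1$ and all others at row $0$, confirming the unitarity of $L_{1/m}(\lambda)$. Since the moves defining $P_m(\lambda)$ preserve per-runner count and total row sum, $P_m(\lambda) = \{\mu_a : a \in R\}$, where $\mu_a$ is the partition whose abacus has the row-$1$ bead on runner~$a$; in particular $\mu_0 = \lambda$. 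Direct computation gives $\mathrm{hd}(\mu_a) = |\{b \in R : b < a\}|$, a bijection onto $\{0, 1, \dots, m-2\}$, and from~\eqref{eq:cmu} one finds $c_{\mu_a} = k + a$. As $a$ ranges over $R$, the values $c_{\mu_a}$ exhaust $\{k, k+1, \dots, m-1, m+1, \dots, n\}$; the omitted value $m$ reflects the missing runner $m-k$ and produces the predicted quadratic jump.

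\emph{Step 2: construction and exactness of the complex.} Ordering $R = \{a_0 < a_1 < \cdots < a_{m-2}\}$ so that $\mathrm{hd}(\mu_{a_i}) = i$, each consecutive pair $(\mu_{a_i}, \mu_{a_{i+1}})$ differs by a single elementary abacus swap. Using the Specht-module-valued Jack polynomial formalism of~\cite{Gri}, I would exhibit a singular vector of $S_n$-type $\mu_{a_i}$ inside $\Delta(\mu_{a_{i+1}})$; multiplication by this singular vector yields an $H_{1/m}$-module map $\Delta(\mu_{a_{i+1}}) \to \Delta(\mu_{a_i})$. Concatenating these produces a candidate complex
\[
0 \to \Delta(\mu_{a_{m-2}}) \to \cdots \to \Delta(\mu_{a_1}) \to \Delta(\lambda) \to L_{1/m}(\lambda) \to 0,
\]
ending in the canonical surjection. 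To prove exactness I would verify the character identity $\chi(L_{1/m}(\lambda)) = \sum_i (-1)^i \chi(\Delta(\mu_{a_i}))$, computing the graded $S_n$-character of $L_{1/m}(\lambda) = I$ from the Etingof--Stoica structure theory together with the Hilbert series of $A/I$ (accessible because $2m > n+1$ forces each irreducible component of $X$ to be a single $m$-cluster, so inclusion-exclusion across the $\binom{n}{m}$ components suffices); combined with non-vanishing of each constructed map, this identity forces the complex to be exact. An alternative would be an induction via the Bezrukavnikov--Etingof parabolic restriction functors~\cite{BE-induction} applied to $S_{m-1} \times S_{k+1} \subset S_n$, in the spirit of Section~\ref{sec:symm}.

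\emph{Step 3: pure resolution of $A/I$.} Splicing the exact sequence of Step~2 onto $0 \to I \to A \to A/I \to 0$ produces a length-$(m-1)$ Cherednik resolution of $A/I$ whose underlying $S_n$-equivariant free $A$-module structure is $A \oplus \bigoplus_i A \otimes S^{\mu_{a_i}}(-c_{\mu_{a_i}})$. By Step~1 this has the claimed degree sequence $(0, k, k+1, \dots, m-1, m+1, \dots, n)$, and minimality is automatic because $c_{\mu_{a_i}}$ is strictly increasing. The boundary case $2m = n+1$ would be handled analogously with $\lambda = ((m-1)^2, 1)$ and no gap in degrees. The main obstacle is Step~2: verifying the character identity (or completing the parabolic-restriction induction) demands enough control over the Jantzen filtration on $\Delta(\lambda)$, and it is precisely the hypothesis $2m > n+1$, forcing $\lambda$ to have at most two nonzero parts and $P_m(\lambda)$ to be totally ordered by $\mathrm{hd}$, that makes the exactness accessible.
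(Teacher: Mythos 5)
Your Step 1 is correct: for $\lambda=(m-1,k)$ the $m$-abacus has one bead on each runner in $R=\{0,\dots,m-1\}\setminus\{m-k\}$ with a single bead one row lower, $P_m(\lambda)=\{\mu_a : a\in R\}$ is totally ordered by homological degree, and $c_{\mu_a}=k+a$, which reproduces the predicted degree sequence with the gap at $m=n-k+1$; Step 3 (splicing onto $0\to I\to A\to A/I\to 0$ and reading off minimality from the strictly increasing $c_\mu$) is also fine. The genuine gap is Step 2, which is where all the work lies. First, you do not actually produce the maps: the existence of a singular vector of isotype $\mu_{a_i}$ in $\Delta_{1/m}(\mu_{a_{i+1}})$, and the vanishing of consecutive composites, are asserted rather than proved (the paper gets both from elementary $S_n$-equivariance: the Kronecker computations \eqref{eqn:kronecker1}--\eqref{eqn:kronecker3} show the equivariant map in each degree is unique up to scalar and that the composites have nowhere to land). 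Second, and more seriously, your exactness mechanism does not work as stated: knowing that the complex is a complex, that each map is nonzero, and that the graded $S_n$-characters satisfy $\operatorname{ch} L=\sum_i(-1)^i\operatorname{ch}\Delta(\mu_{a_i})$ only forces the alternating sum of the homology modules to vanish in the Grothendieck group; nonzero homology classes at different homological positions can cancel, so exactness does not follow. You acknowledge this yourself by deferring to ``control over the Jantzen filtration,'' i.e., the key point is left open. (A further practical issue: the character identity requires the full graded $S_n$-character of $I$, not just its Hilbert series, and an inclusion--exclusion over components does not obviously deliver the equivariant refinement.)

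For comparison, the paper closes exactly this gap with three structural inputs you would need substitutes for: (i) the generators of each term in the first linear strand are singular vectors, so that strand is a complex of $H_{1/m}(S_n,\CC^n)$-modules, and the partitions occurring have diameter $n-k$ in the sense of \cite[\S 8.1]{Gri}, so each $\Delta(\mu)$ there has a \emph{unique} nonzero proper submodule, which forces image $=$ kernel along the first strand; (ii) at the quadratic jump, a dominance-order and $m$-core argument via \cite[Corollary 5.2]{Gri}, together with the multiplicity-one occurrence of $L_c(\lambda)$ in $\Delta_c(\lambda)$, rules out linear syzygies in the kernel; (iii) the regularity bound $\reg A/I=k$ (Lemma~\ref{lem:reg}, resting on Cohen--Macaulayness from \cite{EGL}) excludes higher-degree generators of the kernels along the second strand. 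Without arguments of this kind --- or a genuinely completed induction via the Bezrukavnikov--Etingof functors, which you only mention as an alternative --- the proposal does not yet prove the theorem.
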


The first construction of pure resolutions was given in~\cite{efw} using the representation theory of the general linear group. The modules $A/I$ could be seen as a special case (we obtain only special kinds of degree sequences) of a conjectural symmetric group analogue of that result. From this perspective, the resolution of $A/I$ is interesting because the sizes of its Betti numbers are as small as possible subject to the given degree sequence.

The minimal resolution we construct in Theorem~\ref{thm:pureres} is also constructed in \cite{Wil}, without using the Cherednik algebra. Unfortunately, it appears that there is a gap in the argument of \cite[Proposition~185]{Wil}. We also point out that the statement of \cite[Proposition~177]{Wil} is not correct in arbitrary characteristic, as shown by the following example. 

\begin{example} 
\label{eg:smallchar}
The ring $A/I$ can fail to be Cohen--Macaulay if we work in positive characteristic. For example, in characteristic $2$ with $m=5$ and $n=7$ (so that our Specht module is $I = S^{(4,3)}$), the graded Betti table of the quotient ring $A/I$ is 
\small 
\begin{quote}
\begin{Verbatim}[samepage=true]
       0  1  2  3 4 5
total: 1 14 21 14 7 1
    0: 1  .  .  . . .
    1: .  .  .  . . .
    2: . 14 21  . . .
    3: .  .  . 14 6 1
    4: .  .  .  . 1 . 
\end{Verbatim}
\end{quote}
\normalsize
using {\tt Macaulay2}~\cite{M2}, and hence the depth of $A/I$ is $2$. In all other characteristics, this depth is $3$. Since the dimension of $A$ is also $3$, in all other cases, $A/I$ is Cohen--Macaulay.
\end{example}

We begin with some preliminary results.
The {\it regularity} of an $A$-module $M$ is 
\[
\reg M := \max_i \{j \mid \Tor^A_i(M,\CC)_{i+j} \ne 0\}.
\]
The module $\Tor^A_i(M,\CC)_{i+j}$ can be identified with the degree $i+j$ generators of the $i$th term in the minimal free resolution of $M$.

\begin{lemma} 
\label{lem:reg}
The regularity of $A/I$ is $k$.
\end{lemma}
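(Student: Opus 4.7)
Since $A/I$ is Cohen--Macaulay of Krull dimension $k$ by \cite{EGL}, its Castelnuovo--Mumford regularity equals the socle degree of the Artinian reduction $\bar A := (A/I)/(\ell_1,\dots,\ell_k)$ by a generic system of parameters consisting of $k$ linear forms. My plan is to show that this socle degree is exactly $k$.

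Set $L := V(\ell_1,\dots,\ell_k) \cong \CC^{m-1}$, so that $\bar A \cong \CC[L]/\bar I$, where $\bar I \subset \CC[L]$ is the image of $I$. By Proposition~\ref{prop:Jack generators} together with Notation~\ref{not:partition indexing}, one computes $\mu_{1,1,m}=(1^k,0^{m-1})$ of degree $k$ in the regime $2m>n+1$, and $\mu_{1,1,m} = (2,1^{m-1},0^{m-1})$ of degree $m+1=k+1$ when $2m=n+1$. Hence $I$ is generated in degree $k$ (respectively $k+1$), and so is $\bar I$. Consequently $\bar A_d = \CC[L]_d$ for all $d < k$, contributing $\sum_{d<k}\binom{d+m-2}{m-2} = \binom{n-1}{m-1}$ to the total length. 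Since Cohen--Macaulayness forces the total length of $\bar A$ to equal $\deg(A/I) = \binom{n}{m}$, the remaining $\binom{n-1}{m}$ units of length must lie in degrees $\ge k$, immediately yielding $\reg A/I \ge k$.

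It remains to show the matching upper bound $\bar A_d = 0$ for $d>k$, equivalently $\mathfrak{m}_L^{k+1}\subset \bar I$. The cleanest route is to invoke the graded character of the unitary simple Cherednik algebra module $L_{1/m}(m-1,k) \cong I$ furnished by the Etingof--Stoica classification in \cite{EtSt}; this determines the Hilbert series of $I$, and hence the numerator $h(t)$ in $\mathrm{HS}(A/I;t) = h(t)/(1-t)^k$, in closed form. Reading off $\deg h(t) = k$ then finishes the proof.

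The main obstacle is establishing the top-degree vanishing $\bar A_d = 0$ for $d>k$ without invoking the full character formula. A direct approach would construct enough explicit elements of $I$ whose images in $\CC[L]$ span all of $\mathfrak{m}_L^{k+1}$; candidates include the symmetrized Jack polynomials produced in Theorem~\ref{thm:symm vanishing} together with further elements supplied by the action of the Cherednik algebra on the lowest-weight generator $f_{\mu_{1,1,m}}$. The linear-algebraic bookkeeping for this direct approach appears noticeably more delicate than citing the character formula, so I would attempt the latter first and fall back on the explicit construction only if needed.
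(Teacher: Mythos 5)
Your reduction to the Artinian quotient and your lower bound $\reg A/I \ge k$ are fine (and use essentially the same numerics as the paper: the forced part of the Hilbert function in degrees $<k$ versus the total length $\binom{n}{m}$). The problem is the upper bound, which is the actual content of the lemma: you dispose of it by asserting that the graded character of $L_{1/m}(m-1,k)\cong I$ is ``furnished by the Etingof--Stoica classification.'' It is not. The paper \cite{EtSt} classifies \emph{which} $L_c(\lambda)$ are unitary and describes the submodule structure of the polynomial representation; it contains no graded character or Hilbert series formula for these simples. Character formulas for $L_c(\lambda)$ in type A at $c=1/m$ require much heavier machinery (and the Hilbert series of $A/I$ is exactly the kind of information this paper is in the business of producing -- the conjectured Betti numbers determine it -- so quoting it is close to circular within this context). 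Your acknowledged fallback, producing explicit elements of $I$ whose images span $\mathfrak{m}_L^{k+1}$, is only a sketch. So the key step $\deg h(t)=k$, equivalently $\bar A_d=0$ for $d>k$, is not established.

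The paper closes this gap with an elementary $h$-vector argument that you are in a position to run yourself, since you have already computed $\mu_{1,1,m}$ and hence the generating Specht module $S^{(n-k,k)}$ in degree $k$. Write $H_{A/I}(t)=\bigl(\sum_{i=0}^{k-1}\binom{n-k+i-1}{i}t^i+t^kQ(t)\bigr)/(1-t)^k$; Cohen--Macaulayness gives both that $Q$ has nonnegative coefficients and that $\reg A/I=k+\deg Q$. Evaluating the numerator at $t=1$ against $\deg A/I=\binom{n}{m}$ gives $Q(1)=\binom{n-1}{k-2}$, while the known dimension of the degree-$k$ piece of $I$, namely $\dim S^{(n-k,k)}=\binom{n-1}{k}-\binom{n-1}{k-2}$, gives $Q(0)=\binom{n-1}{k-2}$. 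Nonnegativity then forces $Q$ to be the constant $\binom{n-1}{k-2}$, so $\reg A/I=k$. Replacing your citation of a nonexistent character formula by this computation (or by an honest independent computation of the Hilbert series of the subspace arrangement) is what is needed to make the argument complete.
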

\begin{proof}
The Hilbert functions of $A/I$ and $A$ agree in degrees $< k$, so the former is
\[
H_{A/I}(t) = 
\frac{(\sum_{i=0}^{k-1} \binom{n-k + i-1}{i} t^i) + t^kQ(t)}{(1-t)^k}, 
\]
where $Q(t)$ is some polynomial such that $\deg Q + k = \reg A/I$~\cite[Corollary 4.8]{syzygies}. The evaluation of the numerator at $t=1$ is $\deg A/I$, which we know is $\binom{n}{k-1}$. Since the sum on the left of the numerator counts the number of monomials of degree $k-1$ in $n-k+1$ variables, which is $\binom{n-1}{k-1}$, it follows that $Q(1) = \binom{n-1}{k-2}$. From the Hilbert series, the dimension of the space of degree $k$ polynomials in $I$ is $\binom{n-1}{k} - Q(0)$. We also know that this dimension is $\dim S^{(n-k,k)} = \binom{n-1}{k} - \binom{n-1}{k-2}$, and so $Q(0) = Q(1)$. Since $A/I$ is Cohen--Macaulay, $Q(t)$ has nonnegative coefficients, so it follows that $Q(t)$ is a constant polynomial, and hence $\reg A/I = k$.
\end{proof}

Given a partition $\lambda$, let $m_i(\lambda)$ denote the multiplicity of $i$ in $\lambda$. Also, let $\chi^\lambda$ be the character for the Specht module $S^\lambda$. 
Denote by ${\rm ch}$ the Frobenius characteristic map \cite[I.7]{macdonald} that sends $\chi^\lambda$ to the Schur function $s_\lambda$. For notation, we write ${\rm ch}(f) * {\rm ch}(g) = {\rm ch}(f \otimes g)$. Also, we use $s_{\lambda / \mu}$ to denote a skew Schur function \cite[I.5]{macdonald}.

\begin{lemma}
If $Y^i$ is the character of $S_n$ acting on $\Sym^i(\CC^n)$, then 
\begin{align}
\label{eqn:kronecker1} 
{\rm ch}(Y^1) * s_\lambda 
&= s_1 s_{\lambda / 1}, 
\\
\label{eqn:kronecker2} 
{\rm ch}(Y^2) * s_\lambda 
&= s_1s_{\lambda / 1} + s_2 s_{\lambda / 2} + s_{1,1} s_{\lambda / 1,1}, 
\quad \text{and}\\
\label{eqn:kronecker3}
{\rm ch}(Y^3) * s_\lambda 
&= s_3 s_{\lambda / 3} + s_{2,1} s_{\lambda / 2,1} + s_{1,1,1} s_{\lambda / 1,1,1} + (s_2 + s_{1,1})(s_{\lambda/2} + s_{\lambda/1,1}) + s_1 s_{\lambda/1}.
\end{align}
\end{lemma}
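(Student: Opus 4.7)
The plan is to decompose $\Sym^i(\CC^n)$ as an $S_n$-module via its monomial basis to obtain an explicit expression for $\mathrm{ch}(Y^i)$, and then compute each Kronecker product using the Hopf-algebra compatibility between the internal product $*$ and the coproduct on $\Lambda$.

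\textbf{Step 1 (decomposition).} Since $\CC^n$ is the defining permutation representation, the monomials $x_1^{a_1}\cdots x_n^{a_n}$ form a basis of $\Sym^i(\CC^n)$ whose $S_n$-orbits are indexed by partitions $\mu\vdash i$ with $\ell(\mu)\le n$; the stabilizer of $x_1^{\mu_1}\cdots x_{\ell(\mu)}^{\mu_{\ell(\mu)}}$ is the Young subgroup $S_{n-\ell(\mu)}\times\prod_j S_{m_j(\mu)}$, where $m_j(\mu)$ is the multiplicity of $j$ as a part of $\mu$. Applying the Frobenius characteristic yields
\[
\mathrm{ch}(Y^i)=\sum_{\mu\vdash i,\ \ell(\mu)\le n} h_{n-\ell(\mu)}\prod_{j\ge 1} h_{m_j(\mu)},
\]
which specializes to $\mathrm{ch}(Y^1)=h_{n-1}h_1$, $\mathrm{ch}(Y^2)=h_{n-1}h_1+h_{n-2}h_2$, and $\mathrm{ch}(Y^3)=h_{n-1}h_1+h_{n-2}h_1^2+h_{n-3}h_3$.

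\textbf{Step 2 (Hopf identity for the two-factor pieces).} The standard compatibility (cf.\ Macdonald I.7, Example~8), combined with $\Delta s_\lambda=\sum_\mu s_\mu\otimes s_{\lambda/\mu}$, reads $(f\cdot g)*s_\lambda=\sum_\mu (f*s_\mu)(g*s_{\lambda/\mu})$, with degree-mismatched terms vanishing. Since $h_a$ is the trivial character of $S_a$, we have $h_a*u=u$ for all $u\in\Lambda^a$, so for $a+b=|\lambda|$,
\[
(h_a h_b)*s_\lambda=\sum_{\nu\vdash b} s_\nu\, s_{\lambda/\nu}.
\]
Applying this to each two-factor summand $h_{n-a}h_a$ of $\mathrm{ch}(Y^1),\mathrm{ch}(Y^2),\mathrm{ch}(Y^3)$ proves \eqref{eqn:kronecker1}, \eqref{eqn:kronecker2}, and all of \eqref{eqn:kronecker3} except the $(s_2+s_{1,1})(s_{\lambda/2}+s_{\lambda/1,1})$ summand.

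\textbf{Step 3 ($h_{n-2}h_1^2$ and main obstacle).} The remaining summand of \eqref{eqn:kronecker3} comes from the one non-Schur factor $h_1^2$, and this is the sole delicate step. Pieri gives $h_1^2=s_2+s_{1,1}$, and the $S_2$-character table yields $s_2*s_\rho=s_\rho$ for $\rho\vdash 2$ and $s_{1,1}*s_{1,1}=s_2$. For $\mu\vdash n-2$, write $s_{\lambda/\mu}=\alpha_\mu s_2+\beta_\mu s_{1,1}$; by Frobenius reciprocity, $\alpha_\mu$ and $\beta_\mu$ are respectively the $s_\mu$-coefficients of $s_{\lambda/2}$ and $s_{\lambda/1,1}$. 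Two applications of the identity of Step~2 then give
\[
(h_{n-2}s_2)*s_\lambda=s_2 s_{\lambda/2}+s_{1,1}s_{\lambda/1,1},\qquad (h_{n-2}s_{1,1})*s_\lambda=s_{1,1}s_{\lambda/2}+s_2 s_{\lambda/1,1}.
\]
Summing and factoring produces the desired $(s_2+s_{1,1})(s_{\lambda/2}+s_{\lambda/1,1})$, completing \eqref{eqn:kronecker3}. The only (mild) obstacle is this factorization, which relies on the $S_2$-Kronecker products swapping the pair $(\alpha_\mu,\beta_\mu)$ in exactly the manner required; it is a small-rank coincidence and does not obviously generalize to higher $i$.
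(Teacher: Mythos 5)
Your proposal is correct, and it takes a genuinely different route from the paper. The paper works entirely with power sums: it computes the character values $Y^2(\mu)$ and $Y^3(\mu)$ as explicit polynomials in the cycle-type multiplicities $m_j(\mu)$, uses ${\rm ch}(Y^i)*p_\lambda = Y^i(\lambda)p_\lambda$ to realize internal multiplication by ${\rm ch}(Y^i)$ as a differential operator in the $p_j$ (e.g.\ $m_j \leftrightarrow p_j\,\partial/\partial p_j$), and then converts to skew Schur functions via the adjointness of $\partial/\partial p_j$ to multiplication by $p_j/j$. You instead decompose $\Sym^i(\CC^n)$ into orbits of monomials to write ${\rm ch}(Y^i)$ as a sum of products of complete homogeneous symmetric functions ($h_{n-1}h_1$, $h_{n-1}h_1+h_{n-2}h_2$, $h_{n-1}h_1+h_{n-2}h_1^2+h_{n-3}h_3$), and then apply the standard compatibility of the internal product with the coproduct, $(fg)*s_\lambda=\sum_\mu (f*s_\mu)(g*s_{\lambda/\mu})$ (a reformulation of Mackey/projection formula; your citation of the exact Macdonald example is not quite on target, but the identity is standard), together with $h_a*u=u$ and the $S_2$ Kronecker table. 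All steps check out, including the swap $(\alpha_\mu,\beta_\mu)\mapsto(\beta_\mu,\alpha_\mu)$ under $s_{1,1}*(-)$, which indeed yields the factored term $(s_2+s_{1,1})(s_{\lambda/2}+s_{\lambda/1,1})$; note you could streamline Step 3 by writing $h_1^2=h_2+e_2$ and using that $e_2*(-)$ is the sign twist on $\Lambda^2$, which gives the two displayed identities without introducing the coefficients $\alpha_\mu,\beta_\mu$. What each approach buys: the paper's power-sum calculus is uniform and mechanical once $Y^i(\mu)$ is written as a polynomial in the $m_j(\mu)$, while your argument makes the representation-theoretic structure (the induced-trivial decomposition of the permutation module) transparent and isolates precisely where low-rank Kronecker data enters; contrary to your closing remark, it does extend to higher $i$, at the cost of needing the Kronecker products $s_\nu * s_\rho$ for $|\nu|=|\rho|\le i-1$ arising from the non-$h$ factors.
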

\begin{proof}
Equation~\eqref{eqn:kronecker1} is well-known. 
For~\eqref{eqn:kronecker2}, write $p_\lambda$ for the power symmetric functions,  let $n_\lambda$ be the size of the stabilizer in $S_n$ of the conjugacy class of type $\lambda$, and let $1^\mu$ be the character given by $\lambda \mapsto \delta_{\lambda, \mu}$. Since $Y^2(\mu) = \binom{m_1(\mu)}{2} + m_1(\mu) + m_2(\mu)$, 
\begin{align*}
{\rm ch}(Y^2) * p_\lambda 
&= n_\lambda \, {\rm ch}(Y^2) * {\rm ch}(1^\lambda) = n_\lambda \sum_{\mu \vdash n} n_\mu^{-1} Y^2(\mu) 1^\lambda(\mu) p_\mu\\
&= \left(\binom{m_1(\lambda)}{2} + m_1(\lambda) + m_2(\lambda) \right) p_\lambda = \left(\frac{1}{2} p_1^2 \frac{\partial^2}{\partial^2 p_1} + p_1\frac{\partial}{\partial p_1} + p_2 \frac{\partial}{\partial p_2}\right) p_\lambda.
\end{align*}
By~\cite[p.76]{macdonald}, $\frac{\partial}{\partial p_i}$ is the adjoint to multiplication by $p_i / i$ with respect to the Hall inner product \cite[I.4]{macdonald}. Thus since $p_2 = s_2 - s_{1,1}$, 
\[
p_2 \frac{\partial}{\partial p_2} s_\lambda = \frac{1}{2} (s_2 -
s_{1,1})( s_{\lambda / 2} - s_{\lambda / 1,1}).
\]
Also, $p_1 = s_1$ and $s_1^2 = s_2 + s_{1,1}$, so~\eqref{eqn:kronecker2} follows from
\[
p_1^2 \frac{\partial^2}{\partial^2 p_1} s_\lambda = (s_2 +
s_{1,1})(s_{\lambda/2} + s_{\lambda/1,1}).
\]
Using $Y^3(\mu) = m_1(\mu)^2 + m_1(\mu)m_2(\mu) + m_3(\mu)+ \binom{m_1(\mu)}{3}$,~\eqref{eqn:kronecker3} follows similarly. 
\end{proof}

We are now prepared to construct the desired free resolution. 
\begin{proof}[Proof of Theorem~\ref{thm:pureres}]
Let $\bF_0 := A$, and set
\begin{align} 
\label{eqn:firststrand}
\bF_i := S^{(n-k+1-i,k,1^{i-1})} \otimes A(-k+1-i) \quad \text{for \,$i=1,\dots,n-2k+1$.}
\end{align}
The map $\bF_1 \to \bF_0$ is given by the generators of $I$. By~\eqref{eqn:kronecker1}, there is a unique (up to scalar multiple) map $\bF_i \to \bF_{i-1}$ that respects the $S_n$-action for $i=2,\dots,n-2k+1$, and by~\eqref{eqn:kronecker2}, the composition $\bF_{i+2} \to \bF_{i+1} \to \bF_i$ is $0$ for $i = 1, \dots, n-2k-1$. For $i=0$, this composition is $0$ because the minimal degree that $S^{(n-k-1,k,1)}$ appears in $A$ is $k+2$.
Now take 
\begin{align} 
\label{eqn:secondstrand}
\bF_{n-2k+2+i} := S_{(k-1,k-1-i,1^{n-2k+2+i})} \otimes A(-n+k-2-i) \text{ for $i=0,\dots,k-2$}.
\end{align}
By~\eqref{eqn:kronecker2}, there is a unique (up to scalar multiple) map $\bF_{n-2k+2} \to \bF_{n-2k+1}$ that respects the $S_n$-action, and it maps to the kernel of $\bF_{n-2k+1} \to \bF_{n-2k}$ by~\eqref{eqn:kronecker3}.  Again, by~\eqref{eqn:kronecker1} and~\eqref{eqn:kronecker2}, there are unique (up to scalar multiple) maps $\bF_{n-2k+2+i} \to \bF_{n-2k+1+i}$ that respect the $S_n$-action, and these maps together form a complex.

By degree considerations, at each step of the first linear strand given by~\eqref{eqn:firststrand}, the image of the generators of $\bF_i$ in $\bF_{i-1}$ are singular polynomials for $H := H_c(S_n, \CC^n)$. Thus this linear strand is equivariant with respect to $H$ by~\cite[Proposition 3.24]{etingofma}. Furthermore, in the language of~\cite[\S 8.1]{Gri}, the partitions that appear in the first linear strand have diameter $n-k$, so the corresponding standard modules have a unique nonzero proper $H$-submodule. In particular, it must be equal to the image of $\bF_i \to \bF_{i-1}$, so the first linear strand is exact.

We now claim that the images of $\bF_i \to \bF_{i-1}$ for $i=n-2k+2,\dots,n-k$ are the kernels of the previous maps $\bF_{i-1} \to \bF_{i-2}$. We first separately handle the case that $i=n-2k+2$. As mentioned above, the kernel of $\bF_{n-2k+1} \to \bF_{n-2k}$ is the unique nonzero proper $H$-submodule of $\bF_{n-2k+1}$. The map $\bF_{n-2k+2} \to \bF_{n-2k+1}$ is quadratic, so in this case, we must show that there are no linear kernel elements. The only Specht modules that appear in the kernel are obtained by removing some box from $\lambda = (k,k,1^{n-2k+1})$ and adding another box (see~\eqref{eqn:kronecker1}). 
Thus, if there is a linear kernel element coming from, say, the Specht module $S^\mu$, then by~\cite[Corollary 5.2]{Gri}, $\mu \le \lambda$ in dominance order, and $\mu$ and $\lambda$ have the same $m$-core. This can only happen if $\mu = \lambda$, in which case no such copy of $S^\lambda$ generates an $H$-submodule of $\Delta_c(\lambda)$, since $L_c(\lambda)$ appears with multiplicity 1~\cite[Proposition 3.37]{etingofma}.

This same reasoning shows that for $i>n-2k+2$, the image of $\bF_i \to \bF_{i-1}$ contains all linear kernel elements. Since the regularity of $A/I$ is $k$ by Lemma~\ref{lem:reg}, we also conclude that the kernel of $\bF_i \to \bF_{i-1}$ cannot contain any minimal higher degree generators. Thus the second linear strand~\eqref{eqn:secondstrand} is also exact, completing the proof.
\end{proof}

\begin{remark} \label{rmk:boundarycase}
When $n=2m-1$, the quotient by the $m$-equals ideal is still Cohen--Macaulay \cite{EGL} and the ideal is a unitary representation of the Cherednik algebra. This time the partition for the Specht module that generates the ideal is $\lambda = (m-1,m-1,1)$. The minimal free resolution of this ideal is linear, so the verification of Conjecture~\ref{conj:BGG conjecture} can be handled with the methods of this section, but in fact is much simpler in this case.
\end{remark}

The complex constructed in Theorem~\ref{thm:pureres} is self-dual exactly when $k=2$. We conclude with some examples.
The case $n=7$ and $k=2$ yields the following resolution. 

{\scriptsize
\[
\begin{array}{lll} 
0 \to \tiny {\begin{array}{c} \yng(1,1,1,1,1,1,1) \end{array}\!\! (-7)} \searrow
\vspace{-8mm}
  \\ & 
\hspace{-1.5mm}
{\tiny \begin{array}{c} \yng(2,2,1,1,1)\end{array}\!\!(-5) \to
    \begin{array}{c} \yng(3,2,1,1) \end{array}\!\! (-4) \to \begin{array}{c} \yng(4,2,1) \end{array}\!\! (-3) \to
    \begin{array}{c} \yng(5,2) \end{array}\!\! (-2)} \searrow 
\vspace{-5mm}
    \\ & & 
\hspace{-1mm}
    {\tiny \begin{array}{c} \yng(7) \end{array}}
\end{array}
\]
}

\noindent 
Below is the case $n=7$ and $k=3$.
{\scriptsize
\[
\begin{array}{lll} 0 \to \tiny {\begin{array}{c} \yng(2,1,1,1,1,1) \end{array}\!\! (-7) \to \begin{array}{c} \yng(2,2,1,1,1) \end{array}\!\! (-6) }\searrow
\vspace{-6mm}
    \\
  & 
\hspace{-1mm}
  {\tiny\begin{array}{c} \yng(3,3,1) \end{array}\!\! (-4) \to \begin{array}{c} \yng(4,3) \end{array}\!\! (-3)}  \searrow 
\vspace{-1mm}
  \\ & & 
\hspace{-1.5mm}
  {\tiny \begin{array}{c} \yng(7) \end{array}}
\end{array}
\]
}

\noindent 
Finally, we provide the case $n=9$ and $k=4$.
{\scriptsize
\[
\begin{array}{lll} {\tiny 0 \to \begin{array}{c} \yng(3,1,1,1,1,1,1) \end{array}\!\!\!\! (-9) \to
  \begin{array}{c} \yng(3,2,1,1,1,1) \end{array}\!\!\! (-8) \to \begin{array}{c} \yng(3,3,1,1,1) \end{array}\!\! (-7) } \searrow 
\vspace{-7mm} \\
& 
\hspace{-4mm}
{\tiny \begin{array}{c} \yng(4,4,1) \end{array}\!\! (-5) \to \begin{array}{c} \yng(5,4) \end{array}\!\! (-4)}
\searrow 
\vspace{-1mm}
\\ & &
\hspace{-4mm}
{\tiny \begin{array}{c} \yng(9) \end{array}}
\end{array}
\]
}
\linespread{1}


\end{document}